%
\RequirePackage{fix-cm}
\RequirePackage{fixltx2e}
\documentclass[runningheads,algo2e,ruled,vlined]{extarticle}
\usepackage[utf8]{inputenc}
\usepackage{geometry}
\geometry{verbose,tmargin=2.7cm,bmargin=2.7cm,lmargin=2.5cm}
\setcounter{secnumdepth}{4}
\setlength{\parskip}{\smallskipamount}
\setlength{\parindent}{0pt}
\synctex=-1
\usepackage{color}
\usepackage{verbatim}
\usepackage{float}
\usepackage{algorithm2e}
\usepackage{amsmath}
\usepackage{amsthm}
\usepackage{amssymb}
\usepackage{graphicx}
\usepackage[unicode=true,pdfusetitle,
 bookmarks=true,bookmarksnumbered=true,bookmarksopen=true,bookmarksopenlevel=1,
 breaklinks=true,pdfborder={0 0 0},pdfborderstyle={},backref=false,colorlinks=true]
 {hyperref}
\hypersetup{
 linkcolor=blue, citecolor=blue, urlcolor=blue, filecolor=blue,pdfpagelayout=OneColumn, pdfnewwindow=true,pdfstartview=XYZ, plainpages=false, pdfpagelabels, hyperindex=true}

\makeatletter

\newtheorem{myDef}{Definition}[]

\providecommand{\tabularnewline}{\\}

\theoremstyle{plain}
\newtheorem{thm}{\protect\theoremname}
\theoremstyle{plain}
\newtheorem{assumption}{\protect\assumptionname}
\theoremstyle{plain}
\newtheorem{cor}{\protect\corollaryname}

%

\usepackage{booktabs}
\usepackage{caption}
\usepackage{subcaption}
\usepackage{pgfplots}
\usepackage[all]{nowidow}
\usepackage{tikz}
\usetikzlibrary{er,positioning,bayesnet}
\usepackage{multicol}
\usepackage[inline]{enumitem}
%

\usepackage{natbib}
\usepackage{times}
\usepackage{fancyhdr}
\usepackage{algpseudocode}
\usepackage{algorithmicx}\usepackage{xcolor}
\definecolor{algoColorKeyword}{named}{blue}
\definecolor{algoColorComment}{named}{olive}

\SetKwComment{Comment}{$\triangleright$\ }{}
\renewcommand{\subsubsection}{\@startsection{subsubsection}{3}{\z@}%
{-3.25ex\@plus -1ex \@minus -.2ex}%
{-1.5ex \@plus -.2ex}%
{\normalfont\normalsize\bfseries}}

\include{pythonlisting}

\definecolor{blue}{HTML}{1F77B4}
\definecolor{orange}{HTML}{FF7F0E}
\definecolor{green}{HTML}{2CA02C}

\pgfplotsset{compat=1.14}

\setlength{\floatsep}{3pt plus 1pt minus 1pt}
\setlength{\textfloatsep}{3pt plus 1pt minus 1pt}
\setlength{\intextsep}{3pt plus 1pt minus 1pt}
\setlength{\abovecaptionskip}{2pt plus 1pt minus 1pt}


\makeatother

\providecommand{\assumptionname}{Assumption}
\providecommand{\corollaryname}{Corollary}
\providecommand{\theoremname}{Theorem}

\begin{document}
\title{Markovian approximation of the rough Bergomi model for Monte Carlo option pricing}
\author{Qinwen Zhu\thanks{School of Mathematical Sciences, Nanjing Normal University, Nanjing, 210023, PR China. Email: \href{mailto:qinwen.wendy.zhu@gmail.com}{qinwen.wendy.zhu@gmail.com}}
,\and Gr{\'e}goire Loeper\thanks{School of Mathematics, Monash University, Clayton VIC Australia}\hspace{0.333em}\thanks{Centre for Quantitative Finance and Investment Strategies, Monash University, Clayton VIC Australia}
,\and Wen Chen\thanks{CSIRO Data61, RiskLab, Docklands VIC Australia}
,\and Nicolas Langren{\'e}\footnotemark[4]}
\maketitle
\begin{abstract}

\noindent The recently developed rough Bergomi (rBergomi) model is
a rough fractional stochastic volatility (RFSV) model which can generate
more realistic term structure of at-the-money volatility skews compared
with other RFSV models. However, its non-Markovianity brings mathematical
and computational challenges for model calibration and simulation.
To overcome these difficulties, we show that the rBergomi model can
be approximated by the Bergomi model, which has the Markovian property.
Our main theoretical result is to establish and describe the affine
structure of the rBergomi model. We demonstrate the efficiency and
accuracy of our method by implementing a Markovian approximation algorithm
based on a hybrid scheme.
\noindent \vspace{0.25em}

\noindent \textbf{Keywords}: rough fractional stochastic volatility,
forward variance model, Markovian representation, volatility skew,
Volterra integral, rough Heston, hybrid scheme simulation

\noindent \vspace{0.25em}

\noindent \textbf{MSC codes}: 60H35; 65C30; 91G20; 91G60; 65C05; 62P05;
\textbf{JEL} \textbf{codes}: C63; C15; C52; G13; G12; C02\textbf{}\\
\textbf{ACM codes}: G.3; I.6.1; F.2.1; G.1.2; I.6.3; G.1.10;\textbf{ }
\end{abstract}

\section{Introduction}

The rough Bergomi (rBergomi) model introduced by \citet{bayer2016pricing}
has gained acceptance for stochastic volatility modelling due to its
power-law at-the-money volatility skew which is consistent with empirical
studies (see \citealt{forde2017asymptotics}, \citealt{fukasawa2017short},
\citealt{gatheral2018volatility}) and the market impact function
under the no-arbitrage assumption (see \citealt{jusselin2018no}).
However, the stochastic process which characterizes this volatility
model is rougher than that of a Brownian motion; in particular, the
lack of Markovianity makes classical pricing methods infeasible.

In order to price options under an rBergomi model and calibrate such
a model, \citet{bayer2018hierarchical} propose hierarchical adaptive
sparse grids for option pricing, \citet{bayer2019deep} propose a
deep learning method for rBergomi model calibration, \citet{jacquier2018vix}
develop pricing algorithms for VIX futures and options, and \citet{mccrickerd2018turbocharging}
develop a `turbocharged' Monte Carlo pricing method. In spite of these
efforts, the inherent challenges brought by the rBergomi model still
prevent its widespread adoption in industry.

Inspired by the technique from \citet{abi2019multifactor}, \citet{gatheral2019affine}
and \citet{harms2019affine}, in which the authors design a multi-factor
stochastic volatility model with Markovian structure to approximate
the rough Heston model, we establish an analogous multi-factor affine
structure for the rBergomi model. In the affine structure, the Volterra
kernel corresponds to a superposition of infinitely many Ornstein-Uhlenbeck
(O-U) processes with different speeds of mean reversion. Truncating
this infinite sum into a finite sum of O-U processes yields an approximation
of the rBergomi model which is a Markovian approximated Bergomi (aBergomi)
model. We then prove the existence and uniqueness of solutions to this affine aBergomi
model, and show that its affine structure converges to the one of
the rBergomi model.

To numerically simulate the rBergomi model in practice, we adopt the
hybrid scheme proposed in \citet{bennedsen2017hybrid} for the stochastic
Volterra-type integrals $\tilde{X}=\sqrt{2\alpha+1}\int_{s}^{t}(t-s)^{\alpha}dW_{s}$
($\kappa=1$). The hybrid scheme consists in approximating the power-law
kernel $K_{\text{pow}}=\sqrt{2\alpha+1}(t-s)^{\alpha}$ by a combination
of a power function near zero and a step function elsewhere, with
a lower $\mathcal{O}(N\log N)$ complexity, where $N$ is the number
of time steps, as opposed to the $\mathcal{O}(N^{3})$ complexity
of the Cholesky method in \citet{bayer2016pricing}. Here, the rBergomi
power-law kernel $K_{\text{pow}}$ can be approximated by the exponential
kernel $K_{\text{exp}}=\sum_{i=1}^{n}\alpha_{i}e^{-\kappa_{i}(t-s)}$
after truncating $K_{\text{pow}}$ before $t$. Our numerical tests
demonstrate that using $n=25$ exponential terms in $K_{\text{exp}}$
(i.e. a 25-term O-U process), we can obtain an accurate (low root
mean squared error), yet tractable and computationally efficient approximation
of the fractional rBergomi model.

The paper is organized as follows. In Section \ref{sec:rBergomi},
we introduce the Bergomi and rBergomi models and discuss their respective
ATM volatility skews. In particular we provide for the first time
a proof that the ATM volatility skew of the rBergomi model is equivalent
to the power $T^{H-\frac{1}{2}}$ while this does not hold for the
Bergomi model (equation \eqref{eq:2Dvolskew}). In Section \ref{sec:Markovian},
we establish the affine structure of the rough Bergomi model. Section
\ref{sec:rBergomi_approximation} is dedicated to the approximation
of the rough Bergomi model by a multi-factor Bergomi model. Finally,
Section \ref{sec:numerical_method} compares numerical simulations
of the rBergomi model with our approximated Bergomi (aBergomi) model
with a finite number of terms, showing the effectiveness of our approximation.
\section{Bergomi and rough Bergomi models\label{sec:rBergomi}}

This section introduces the Bergomi and rough Bergomi stochastic volatility
models (Definitions \ref{def:Bergomi} and \ref{def:rBergomi}), along
with the corresponding notations used throughout the paper.

We consider a filtered probability space $(\Omega,\mathcal{F},(\mathcal{F}_{t})_{t\in\mathbb{R}},\mathbb{Q})$,
which supports two dimensional correlated Brownian motions $W$ and $B$. A log
price process $X_{t}:=\log(S_{t})$ is assumed to follow the dynamics
\begin{equation}
dX_{t}=-\frac{1}{2}V_{t}dt+\sqrt{V_{t}}dW_{t}\,,\label{eq:log_price_dynamics}
\end{equation}
where $V_{t}\geq0$ is the instantaneous spot variance process. Let
$\xi_{t}^{u},u\geq t$ be the instantaneous forward variance for date
$u$ observed at time $t$; in particular $\xi_{t}^{t}=V_{t}$ corresponds
to the spot variance.

\citet{bayer2016pricing} proposed the so-called \textit{rough Bergomi
model} where the forward variance follows
\begin{equation}
d\xi_{t}^{u}=\xi_{t}^{u}\eta\sqrt{2\alpha+1}\left(u-t\right){}^{\alpha}dB_{t},~u\geq t\label{eq:forward_variance_rBergomi}
\end{equation}
where $W$ and $B$ have correlation $\rho$, $\alpha\triangleq H-\frac{1}{2}\in\left(-\frac{1}{2},0\right)$
is a negative exponent depending on the Hurst exponent $H\in\left(0,\frac{1}{2}\right)$
of the underlying fractional Brownian motion, and $\eta$ is a positive
parameter depending on $H$. The definition of the rBergomi model
is summarized below:

\begin{myDef}
\label{def:rBergomi}The rBergomi stochastic volatility model takes
the form
\begin{equation}
\left\{ \begin{aligned} & dX_{t}=-\frac{1}{2}V_{t}dt+\sqrt{V_{t}}dW_{t},\\
 & d\xi_{t}^{u}=\xi_{t}^{u}\eta\sqrt{2\alpha+1}(u-t)^{\alpha}dB_{t},
\end{aligned}
\right.\label{eq:rBergomi}
\end{equation}
where $\alpha=H-\frac{1}{2}\in\left(-\frac{1}{2},0\right)$, and $d\left\langle W,B\right\rangle _{t}=\rho dt$.
\end{myDef}
By contrast, the two-factor Bergomi model is defined as follows.

\begin{myDef}
\label{def:Bergomi}The two-factor Bergomi model (\citealt{bergomi2005smile},
\citealt{bergomi2009smile}) is defined by:
\begin{equation}
\left\{ \begin{aligned} & dX_{t}=-\frac{1}{2}V_{t}dt+\sqrt{V_{t}}dW_{t}^{S},\\
 & d\xi_{t}^{u}=\xi_{t}^{u}\alpha_{\theta}\omega\left(\left(1-\theta\right)e^{-\kappa_{X}(u-t)}dW_{t}^{X}+\theta e^{-\kappa_{Y}(u-t)}dW_{t}^{Y}\right)
\end{aligned}
\right.\label{eq:Bergomi}
\end{equation}
with
\[
\begin{aligned} & d\langle W^{S},W^{X}\rangle_{t}=\rho_{SX}dt\\
 & d\langle W^{S},W^{Y}\rangle_{t}=\rho_{SY}dt\\
 & d\langle W^{X},W^{Y}\rangle_{t}=\rho_{XY}dt,
\end{aligned}
\]
where $\xi_{t}^{t}=V_{t}=\omega$ is the lognormal volatility of the
instantaneous variance under the normalizing factor $\alpha_{\theta}=\left(\left(1-\theta\right){}^{2}+2\rho_{XY}\theta\left(1-\theta\right)+\theta^{2}\right){}^{-\frac{1}{2}}$
and $\theta$ is a mixing parameter of the short term factor driven
by $W^{X}$ and the long term factor driven by $W^{Y}$ ($\kappa_{X}>\kappa_{Y}$).
\end{myDef}

\begin{assumption}
\label{assu:initial_curve}Without loss of generality, we assume throughout the
paper that the initial forward variance curve $\xi_{0}^{u},u\geq0$
is flat. This simplification is common in the rBergomi literature,
see for example \citet{bayer2016pricing}, \citet{bayer2018hierarchical}
and \citet{bayer2019deep}. We use the notation $\xi_{0}$ for the
constant initial forward variance curve.
\end{assumption}

\subsection{ATM volatility skew }

This subsection derives the ATM volatility skew of the rBergomi and
Bergomi models, as the more realistic ATM volatility skew of the rBergomi
model over the Bergomi model is one of the motivations behind the
introduction of the rBergomi model.

From \citet{bergomi2012stochastic}, we can define the price and the
volatility dynamics of a generic stochastic volatility model as follows:
\begin{equation}
\left\{ \begin{aligned} & dX_{t}=-\frac{1}{2}V_{t}dt+\sqrt{V_{t}}dW_{t}\\
 & d\xi_{t}^{u}=\lambda(t,u,\xi_{t}^{u})dB_{t},
\end{aligned}
\right.\label{eq:general_SV}
\end{equation}
where in particular note that $X_{t}=\xi_t^t=\ln(S_{t})$ is the log-spot, $V_{t}$ is the instantaneous
spot variance, $\xi_{t}^{u}$ is the instantaneous forward variance
for date $u$ observed at time $t$, and $\lambda=(\lambda_{1},\cdots,\lambda_{d})$
is the volatility of forward instantaneous variances which takes values
in $\mathbb{R}^{d}$ where $d$ is the dimension of the Brownian motion
$B$. Note that in this formulation, the covariance between spot and
variance is modelled through the first component of $\lambda$, see \cite{bergomi2012stochastic} for more details.

One can derive the following second-order expression (w.r.t. volatility
of volatility) for the Black-Scholes implied volatility:
\begin{equation}
\sigma_{BS}(k,T)=\hat{\sigma}_{T}^{ATM}+\mathcal{S}_{T}k+\mathcal{C}_{T}k^{2}+\mathcal{O}(\varepsilon^{3})\,,\label{eq:2nd_order_BS_implied_vol}
\end{equation}

where $k=\ln\left(\frac{K}{S_{0}}\right)$, $K$ is the strike and
$\varepsilon$ is a dimensionless scaling factor for the volatility
of variances. The ATM volatility and the two coefficients $\mathcal{S}_{T}$
and $\mathcal{C}_{T}$ are given by
\[
\begin{aligned}\hat{\sigma}_{T}^{ATM} & =\hat{\sigma}_{T}^{VS}\left[1+\frac{\varepsilon}{4v}C^{X\xi}+\frac{\varepsilon^{2}}{32v^{3}}\left(12\left(C^{X\xi}\right){}^{2}-v(v+4)C^{\xi\xi}+4v(v-4)C^{\mu}\right)\right],\\
\mathcal{S}_{T} & =\hat{\sigma}_{T}^{VS}\left[\frac{\varepsilon}{2v^{2}}C^{X\xi}+\frac{\varepsilon^{2}}{8v^{3}}\left(4C^{\mu}v-3(C^{X\xi})^{2}\right)\right],\\
\mathcal{C}_{T} & =\hat{\sigma}_{T}^{VS}\frac{\varepsilon^{2}}{8v^{4}}\left[4C^{\mu}v+C^{\xi\xi}v-6\left(C^{X\xi}\right){}^{2}\right]
\end{aligned}
\]

where $v=\int_{0}^{T}\xi_{0}^{s}ds$ is the total variance to expiration
$T$, $\hat{\sigma}_{T}^{VS}=\sqrt{\frac{v}{T}}=\sqrt{\frac{\int_{0}^{T}\xi_{0}^{s}ds}{T}}$
is the effective volatility, and $C^{X\xi},~C^{\xi\xi},~C^{\mu}$
are autocorrelations \citep{bergomi2012stochastic}:
\begin{itemize}
\item $C_{t}^{X\xi}(\xi)=\int_{t}^{T}ds\int_{s}^{T}du\mu\left(s,u,\xi\right)=\int_{t}^{T}ds\int_{s}^{T}du\frac{\mathbb{E}\left[dX_{s}d\xi_{s}^{u}\right]}{ds}$
is the doubly integrated spot-variance covariance function
\item $C^{X\xi}=C_{0}^{X\xi}(\xi_{0})=\int_{0}^{T}ds\int_{s}^{T}du\frac{\mathbb{E}\left[dX_{s}d\xi_{0}^{u}\right]}{ds}$
\item $C_{t}^{\xi\xi}(\xi)=\int_{t}^{T}ds\int_{s}^{T}du\int_{s}^{T}du'\nu(s,u,u',\xi)=\int_{t}^{T}ds\int_{s}^{T}du\int_{s}^{T}u'\frac{\mathbb{E}\left[d\xi_{s}^{u}d\xi_{s}^{u'}\right]}{ds}$
is the triply integrated variance/variance covariance function
\item $C^{\xi\xi}=C_{0}^{\xi\xi}(\xi_{0})=\int_{0}^{T}dt\int_{s}^{T}du\int_{s}^{T}d{u'}\frac{\mathbb{E}\left[d\xi_{0}^{u}d\xi_{0}^{u'}\right]}{ds}$.
\item $C_{t}^{\mu}(\xi)=\int_{t}^{T}ds\int_{s}^{T}du\mu\left(s,u,\xi\right)\partial_{\xi_{0}^{u}}\left(C_{s}^{X\xi}(\xi)\right)$
is the double time-integral of the instance spot variance covariance
function times the sensitivity of $C_{t}^{X\xi}(\xi)$ with respect
to instantaneous forward variances
\item $C^{\mu}=C_{0}^{\mu}(\xi_{0})=\int_{0}^{T}ds\int_{s}^{T}du\frac{\mathbb{E}\left[dX_{s}d\xi_{0}^{u}\right]}{ds}\partial_{\xi_{0}^{u}}\left(C_{s}^{X\xi}\left(\xi\right)\right)$.
\end{itemize}

\subsubsection{ATM volatility skew in the rBergomi model}
\begin{thm}
In the rBergomi model \eqref{eq:rBergomi}, the ATM volatility skew
$\psi(T)$ satisfies
\begin{equation}
\psi(T)\triangleq\left|\frac{\partial}{\partial_{k}}\sigma_{BS}(k,T)\right|_{k=0}\sim T^{H-\frac{1}{2}}\,.\label{eq:rBergomi_skew}
\end{equation}
\end{thm}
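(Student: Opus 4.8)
The plan is to collapse the skew onto a single coefficient of the Bergomi expansion \eqref{eq:2nd_order_BS_implied_vol} and then track how that coefficient scales in $T$. Differentiating \eqref{eq:2nd_order_BS_implied_vol} with respect to $k$ and setting $k=0$ annihilates both the ATM level $\hat{\sigma}_T^{ATM}$ and the curvature $\mathcal{C}_T$, leaving $\psi(T)=\left|\mathcal{S}_T\right|$. The whole problem therefore reduces to the small-$T$ behaviour of $\mathcal{S}_T$. Since $\mathcal{S}_T$ is built from the autocorrelation functionals $C^{X\xi},C^{\xi\xi},C^{\mu}$ and the total variance $v$, I would compute each of these explicitly for the rBergomi dynamics \eqref{eq:rBergomi} under the flat-curve Assumption \ref{assu:initial_curve}.

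First I would evaluate $C^{X\xi}$. From \eqref{eq:rBergomi} the spot--forward-variance covariation rate is $\mathbb{E}[dX_s\,d\xi_s^u]/ds=\rho\eta\sqrt{2\alpha+1}\,(u-s)^{\alpha}\sqrt{V_s}\,\xi_s^u$, and evaluating along the deterministic initial curve $V_s\to\xi_0$, $\xi_s^u\to\xi_0$ leaves an integrand proportional to $(u-s)^{\alpha}$. The inner integral $\int_s^T(u-s)^{\alpha}\,du=(T-s)^{\alpha+1}/(\alpha+1)$ followed by $\int_0^T(T-s)^{\alpha+1}\,ds$ produces $C^{X\xi}=\frac{\rho\eta\sqrt{2\alpha+1}}{(\alpha+1)(\alpha+2)}\,\xi_0^{3/2}\,T^{\alpha+2}$. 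In parallel the flat curve gives $v=\xi_0 T$ and $\hat{\sigma}_T^{VS}=\sqrt{\xi_0}$, a constant in $T$. Substituting the dominant term $\hat{\sigma}_T^{VS}\frac{\varepsilon}{2v^2}C^{X\xi}$ of $\mathcal{S}_T$, the $\xi_0$ factors cancel exactly and the $T$-dependence collapses to $T^{\alpha+2}/T^{2}=T^{\alpha}=T^{H-\frac12}$, which is precisely \eqref{eq:rBergomi_skew}; note that at first order in $\varepsilon$ this identity is exact, not merely asymptotic.

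It then remains to confirm that this first-order term genuinely dominates as $T\to0$, i.e.\ that the $\varepsilon^2$ corrections in $\mathcal{S}_T$ do not generate a smaller power of $T$. The squared contribution gives $(C^{X\xi})^2/v^{3}\propto T^{2\alpha+4}/T^{3}=T^{2\alpha+1}=T^{2H}$, and a short dimensional argument ($C^{\mu}\propto\xi_0^{2}T^{2\alpha+3}$, hence $C^{\mu}v/v^{3}\propto T^{2\alpha+1}$) shows $C^{\mu}$ enters at the same order. Since $\alpha\in(-\frac12,0)$ forces $\alpha<0<2\alpha+1$, both corrections vanish strictly faster than the leading $T^{\alpha}$ term as $T\to0$, so $\psi(T)\sim\mathrm{const}\cdot T^{H-\frac12}$, as claimed.

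The step I expect to be most delicate is the order estimate for $C^{\mu}$, whose definition involves the functional derivative $\partial_{\xi_0^u}C_s^{X\xi}(\xi)$ of the doubly-integrated covariance with respect to the forward-variance curve; keeping track of which maturities this derivative activates through both the $\sqrt{V_s}$ and the $\xi_s^u$ dependence of the integrand is the only genuinely fiddly computation. Fortunately the theorem asks only for the scaling exponent, so it suffices to establish that $C^{\mu}$ sits at order $T^{2\alpha+3}$ rather than to pin down its constant, which keeps this step manageable. As a sanity check I would repeat the $C^{X\xi}$ computation with the Bergomi exponential kernel $e^{-\kappa(u-s)}$ in place of $(u-s)^{\alpha}$: the inner integral then behaves like $(T-s)$ for small $T-s$, giving $C^{X\xi}\propto T^{2}$ and a bounded skew $\mathcal{S}_T\to\mathrm{const}$, which confirms that the blow-up $T^{H-\frac12}$ is a feature specific to the rough power-law kernel and is absent in the classical Bergomi model.
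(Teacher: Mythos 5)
Your proposal is correct and takes essentially the same route as the paper: both rely on the Bergomi--Guyon expansion \eqref{eq:2nd_order_BS_implied_vol}, reduce the skew to the coefficient $\mathcal{S}_T$, compute $C^{X\xi}\propto\xi_0^{3/2}T^{\alpha+2}$ under the flat-curve Assumption \ref{assu:initial_curve}, and conclude from $C^{X\xi}/v^{2}\sim T^{\alpha}=T^{H-\frac{1}{2}}$. Your explicit check that the $\varepsilon^{2}$ corrections scale as $T^{2H}$ and are therefore subdominant as $T\to 0$ is a point the paper leaves implicit (it simply truncates at first order in $\varepsilon$), but it is a refinement of the same argument rather than a different one.
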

\begin{proof}
We first explicit the autocorrelation functional in the rBergomi model.
Using the fact that $\frac{\mathbb{E}[dX_{t}d\xi_{t}^{u}]}{dt}=\rho\eta\sqrt{2\alpha+1}(u-t)^{\alpha}\sqrt{\xi_{t}^{t}}\xi_{t}^{u}$,
the autocorrelation functionals $C^{X\xi}$ and $C^{\xi\xi}$ are
given by
\[
\begin{aligned}C^{X\xi} & =\int_{0}^{T}ds\int_{s}^{T}du\frac{\mathbb{E}[dX_{s}d\xi_{0}^{u}]}{ds}\\
 & =\rho\eta\sqrt{2\alpha+1}\int_{0}^{T}\sqrt{\xi_{0}^{s}}ds\int_{s}^{T}\xi_{0}^{u}(u-s)^{\alpha}du+\mathcal{O}\left(\varepsilon^{3}\right),\\
C^{\xi\xi} & =\int_{0}^{T}ds\int_{s}^{T}du\int_{s}^{T}du'\frac{\mathbb{E}[d\xi_{0}^{u}d\xi_{0}^{u'}]}{ds}\\
 & =\int_{0}^{T}ds\int_{s}^{T}du\int_{s}^{T}du'\eta^{2}(2\alpha+1)(u-s)^{\alpha}(u'-s)^{\alpha}\xi_{0}^{u}\xi_{0}^{u'}\\
 & =\eta^{2}(2\alpha+1)\int_{0}^{T}ds\left(\int_{0}^{T}\xi_{0}^{u}(u-s)^{\alpha}du\right)^{2}+\mathcal{O}\left(\varepsilon^{4}\right).
\end{aligned}
\]
Then, using the fact that
\[
\begin{aligned}\partial_{\xi_{s}^{u}}(C_{s}^{X\xi}(\xi)) & =\rho\eta\sqrt{2\alpha+1}\left[\int_{s}^{T}dt\sqrt{\xi_{s}^{t}}(u-t)^{\alpha}\textbf{1}_{u>t}+\frac{1}{2\sqrt{\xi_{s}^{u}}}\int_{u}^{T}\xi_{s}^{t}(t-u)^{\alpha}dt\right]\\
 & =\rho\eta\sqrt{2\alpha+1}\left[\int_{s}^{u}dt\sqrt{\xi_{s}^{t}}(u-t)^{\alpha}+\frac{1}{2\sqrt{\xi_{s}^{u}}}\int_{u}^{T}\xi_{s}^{t}(t-u)^{\alpha}dt\right],
\end{aligned}
\]
we obtain
\[
\begin{aligned}C^{\mu}= & \int_{0}^{T}ds\int_{s}^{T}du\frac{\mathbb{E}[dX_{s}d\xi_{0}^{u}]}{dt}\partial_{\xi_{0}^{u}}\left(C_{s}^{X\xi}(\xi)\right)\\
= & \rho^{2}\eta^{2}(2\alpha+1)\int_{0}^{T}\sqrt{\xi_{0}^{s}}ds\int_{s}^{T}(u-s)^{\alpha}du\\
 & \times\left[\int_{s}^{u}\sqrt{\xi_{0}^{t}}\xi_{0}^{u}(u-t)^{\alpha}dt+\frac{\sqrt{\xi_{0}^{u}}}{2}\int_{u}^{T}\xi_{0}^{t}(t-u)^{\alpha}dt\right]+\mathcal{O}\left(\varepsilon^{4}\right).
\end{aligned}
\]
Therefore, using Assumption \textcolor{blue}{\ref{assu:initial_curve}},
we obtain the following explicit first-order approximation:

\[
C^{X\xi}=\rho\eta\sqrt{2H}\int_{0}^{T}\sqrt{\xi_{0}}ds\int_{s}^{T}\xi_{0}\left(u-s\right){}^{\alpha}du\mathcal{+O}\left(\varepsilon^{3}\right)\approx C_{H}\xi_{0}^{\frac{3}{2}}T^{H+\frac{3}{2}}\,,
\]
where $C_{H}$ is a constant depending on $H$. We are then able to
compute the first-order approximations of the three correlation values
$C^{X\xi},~C^{\xi\xi},~C^{\mu}$ explicitly. The first-order approximation
of $\sigma_{BS}(k,T)$ can be written as follows:
\[
\begin{aligned}\sigma_{BS}(k,T) & =\hat{\sigma}_{T}^{VS}+\frac{1}{4v}C^{x\xi}\hat{\sigma}_{T}^{VS}\varepsilon+\frac{1}{2v^{2}}C^{X\xi}\hat{\sigma}_{T}^{VS}\varepsilon k\\
 & =\hat{\sigma}^{VS}+\left(\frac{1}{4v}+\frac{k}{2v^{2}}\right)C_{H}\xi_{0}^{\frac{3}{2}}T^{H+\frac{3}{2}}\hat{\sigma}_{T}^{VS}\varepsilon,
\end{aligned}
\]
Thus, the ATM volatility skew generated by the rBergomi model satisfies
Equation \ref{eq:rBergomi_skew}, which is consistent with empirical
evidence (see for example, \citet{gatheral2018volatility}).
\end{proof}

\subsubsection{ATM volatility skew in the two-factor Bergomi model}

We now compare this result to the volatility skew in the classical
two-factor Bergomi model.
\begin{thm}
In the two-factor Bergomi model , the ATM volatility skew satisfies

\begin{equation}
\psi(T)\sim\frac{C_{1}\left(\kappa_{X}T-1+e^{-\kappa_{X}T}\right)}{T^{2}}+\frac{C_{2}\left(\kappa_{Y}T-1+e^{-\kappa_{Y}T}\right)}{T^{2}}\label{eq:2Dvolskew}
\end{equation}
\end{thm}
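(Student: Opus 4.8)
The plan is to follow the exact template of the preceding theorem. Since $\psi(T)=\left|\partial_k\sigma_{BS}(k,T)\right|_{k=0}=\left|\mathcal{S}_T\right|$ by \eqref{eq:2nd_order_BS_implied_vol}, and the leading-order contribution to the skew coefficient is $\mathcal{S}_T\approx\hat{\sigma}_T^{VS}\frac{\varepsilon}{2v^2}C^{X\xi}$, it suffices to evaluate the single functional $C^{X\xi}$ under the Bergomi dynamics; the remaining functionals $C^{\xi\xi}$ and $C^{\mu}$ enter only at order $\varepsilon^2$ and are therefore subleading. First I would read off, directly from the second line of \eqref{eq:Bergomi} together with the three correlation identities, the instantaneous spot--variance covariation
\[
\frac{\mathbb{E}\left[dX_t\,d\xi_t^u\right]}{dt}=\sqrt{V_t}\,\xi_t^u\,\alpha_\theta\omega\left((1-\theta)\rho_{SX}e^{-\kappa_X(u-t)}+\theta\rho_{SY}e^{-\kappa_Y(u-t)}\right).
\]
Evaluating along the flat initial curve of Assumption \ref{assu:initial_curve} (so $\xi_0^s=\xi_0$ and $V_0=\xi_0$) and substituting into the definition of $C^{X\xi}$ gives
\[
C^{X\xi}=\xi_0^{3/2}\alpha_\theta\omega\int_0^T\!ds\int_s^T\!du\left((1-\theta)\rho_{SX}e^{-\kappa_X(u-s)}+\theta\rho_{SY}e^{-\kappa_Y(u-s)}\right).
\]

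The key step is the evaluation of the double integral of each exponential kernel. For a generic mean-reversion speed $\kappa$ the inner integral is $\int_s^T e^{-\kappa(u-s)}\,du=\kappa^{-1}(1-e^{-\kappa(T-s)})$, and integrating once more in $s$ produces the characteristic shape
\[
\int_0^T\frac{1-e^{-\kappa(T-s)}}{\kappa}\,ds=\frac{\kappa T-1+e^{-\kappa T}}{\kappa^2}.
\]
This is precisely the factor appearing in \eqref{eq:2Dvolskew}. Applying it to the $\kappa_X$ and $\kappa_Y$ terms separately expresses $C^{X\xi}$ as a linear combination of $(\kappa_X T-1+e^{-\kappa_X T})/\kappa_X^2$ and $(\kappa_Y T-1+e^{-\kappa_Y T})/\kappa_Y^2$.

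Finally I would assemble the skew. The flat curve gives $v=\int_0^T\xi_0\,ds=\xi_0 T$ and $\hat{\sigma}_T^{VS}=\sqrt{v/T}=\sqrt{\xi_0}$, so the prefactor in $\mathcal{S}_T\approx\hat{\sigma}_T^{VS}\frac{\varepsilon}{2v^2}C^{X\xi}$ collapses to $\sqrt{\xi_0}\cdot\xi_0^{3/2}/\xi_0^2=1$ and all dependence on $\xi_0$ cancels. Collecting the constants as $C_1=\varepsilon\alpha_\theta\omega(1-\theta)\rho_{SX}/(2\kappa_X^2)$ and $C_2=\varepsilon\alpha_\theta\omega\theta\rho_{SY}/(2\kappa_Y^2)$ and taking the absolute value yields exactly \eqref{eq:2Dvolskew}.

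I do not expect a genuine obstacle: once the covariation is written down the calculation is entirely elementary. The only points requiring care are recognizing that the skew is governed at leading order by $\mathcal{S}_T$ rather than by the ATM level or the convexity $\mathcal{C}_T$, and keeping the $\mathcal{O}(\varepsilon^2)$ terms consistently out of the first-order statement. The qualitative payoff, which is the real purpose of the result, is seen by expanding for small $T$: since $(\kappa T-1+e^{-\kappa T})/(\kappa^2 T^2)\to\tfrac12$ as $T\to0$, the Bergomi skew $\psi(T)$ converges to a finite constant rather than exploding like $T^{H-\frac12}$, in sharp contrast with the rBergomi skew of the previous theorem.
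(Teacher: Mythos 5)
Your proposal is correct and follows essentially the same route as the paper: both rest on the Bergomi--Guyon expansion, reduce the first-order skew to the single functional $C^{X\xi}$ evaluated along the flat curve, and obtain the characteristic factor $(\kappa T-1+e^{-\kappa T})/\kappa^{2}$ (the paper's $\mathcal{J}(\kappa T)$) from the same elementary double integral, with $\omega_{1X}=(1-\theta)\rho_{SX}$ and $\omega_{1Y}=\theta\rho_{SY}$ matching your constants. The only difference is economy: the paper additionally decomposes $W^{S},W^{X},W^{Y}$ into independent Brownian motions and computes $C^{\xi\xi}$ and $C^{\mu}$ in full before discarding them in the first-order conclusion, whereas you correctly note at the outset that these enter $\mathcal{S}_{T}$ only at $\mathcal{O}(\varepsilon^{2})$ and skip them.
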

\begin{proof}
The Brownian motions $W^{S},W^{X},W^{Y}$ can be decomposed as:
\[
\begin{aligned} & W^{S}=W^{1}\\
 & W^{X}=\rho_{SX}W^{1}+\sqrt{1-\rho_{SX}^{2}}W^{2}\\
 & W^{Y}=\rho_{SY}W^{1}+\chi\sqrt{1-\rho_{SY}^{2}}W^{2}+\sqrt{(1-\chi^{2})(1-\rho_{SY}^{2})}W^{3},
\end{aligned}
\]
where $W^{1},W^{2},W^{3}$ are three independent Brownian motions
and $\chi\triangleq\frac{\rho_{XY}-\rho_{SX}\rho_{SY}}{\sqrt{1-\rho_{SX}^{2}}\sqrt{1-\rho_{SY}^{2}}}$.
Thus the volatilities of variance $\lambda=(\lambda_{1},\lambda_{2},\lambda_{3})$
in the general formulation \eqref{eq:general_SV} can be written as:
\[
\begin{aligned} & \lambda_{1}(t,u,\xi)=\alpha_{\theta}\omega\xi_{0}^{u}\left[\left(1-\theta\right)\rho_{SX}e^{-\kappa_{X}(u-t)}+\theta\rho_{SY}e^{-\kappa_{Y}(u-t)}\right],\\
 & \lambda_{2}(t,u,\xi)=\alpha_{\theta}\omega\xi_{0}^{u}\left[\left(1-\theta\right)\sqrt{1-\rho_{SX}^{2}}e^{-\kappa_{X}(u-t)}+\theta\chi\sqrt{1-\rho_{SY}^{2}}e^{-\kappa_{Y}(u-t)}\right],\\
 & \lambda_{3}(t,u,\xi)=\alpha_{\theta}\omega\xi_{0}^{u}\theta\sqrt{\left(1-\chi^{2}\right)\left(1-\rho_{SY}^{2}\right)}e^{-\kappa_{Y}(u-t)},
\end{aligned}
\]
or equivalently:
\[
\lambda_{i}(t,u,\xi)=\alpha_{\theta}\omega\xi_{0}^{u}\left(\omega_{iX}e^{-\kappa_{X}(u-t)}+\omega_{iY}e^{-\kappa_{Y}(u-t)}\right),
\]
where
\[
\begin{aligned} & \left(\omega_{iX}\right)_{i=1,2,3}\triangleq\left(\left(1-\theta\right)\rho_{SX},\left(1-\theta\right)\sqrt{1-\rho_{SX}^{2}},0\right){}^{\top},\\
 & \left(\omega_{iY}\right)_{i=1,2,3}\triangleq\left(\theta\rho_{SY},\theta\chi\sqrt{1-\rho_{SY}^{2}},\theta\sqrt{(1-\chi^{2})(1-\rho_{SY}^{2})}\right){}^{\top}.
\end{aligned}
\]

The corresponding covariances can be expressed similarly as:
\[
\begin{aligned}C^{X\xi}= & \int_{0}^{T}du\int_{0}^{u}dt\sqrt{\xi_{0}^{t}}\lambda_{1}\left(t,u,\xi_{0}\right)\\
= & \alpha_{\theta}\omega\left[(1-\theta)\rho_{SX}\int_{0}^{T}du\xi_{0}^{u}\int_{0}^{u}dt\sqrt{\xi_{0}^{t}}e^{-\kappa_{X}(u-t)}+\theta\rho_{SY}\int_{0}^{T}du\xi_{0}^{u}\int_{0}^{u}dt\sqrt{\xi_{0}^{t}}e^{-\kappa_{Y}(u-t)}\right]\\
C^{\xi\xi}= & \sum_{i=1}^{3}\int_{0}^{T}ds\left(\int_{s}^{T}du\lambda_{i}\left(s,u,\xi_{0}\right)\right){}^{2}\\
= & \alpha_{\theta}^{2}\omega^{2}\sum_{i=1}^{3}\int_{0}^{T}ds\left(\omega_{iX}\int_{s}^{T}du\xi_{0}^{u}e^{-\kappa_{X}(u-s)}+\omega_{iY}\int_{s}^{T}du\xi_{0}^{u}e^{-\kappa_{Y}(u-s)}\right){}^{2}\\
C^{\mu}= & \int_{0}^{T}ds\int_{s}^{T}du\sqrt{\xi_{0}^{s}}\lambda_{1}\left(s,u,\xi_{0}\right)\left(\frac{1}{2\sqrt{\xi_{0}^{u}}}\int_{u}^{T}dt\lambda_{1}\left(u,t,\xi_{0}\right)+\int_{s}^{u}dr\sqrt{\xi_{0}^{r}}\partial_{\xi_{0}^{u}}\lambda_{1}\left(r,u,\xi\right)\right).
\end{aligned}
\]

Using once again Assumption \ref{assu:initial_curve}, we obtain
\[
\begin{aligned}C^{X\xi}= & \alpha_{\theta}\omega\xi_{0}^{\frac{3}{2}}T^{2}\left(\omega_{1X}\mathcal{J}(\kappa_{X}T)+\omega_{1Y}\mathcal{J}(\kappa_{Y}T)\right)\\
C^{\xi\xi}= & \alpha_{\theta}^{2}\omega\xi_{0}^{2}T^{3}\left(\omega_{0}+\omega_{X}\mathcal{I}\left(\kappa_{X}T\right)+\omega_{Y}\mathcal{I}\left(\kappa_{Y}T\right)+\omega_{XX}\mathcal{I}\left(2\kappa_{X}T\right)+\omega_{YY}\mathcal{I}\left(2\kappa_{Y}T\right)+\omega_{XY}\mathcal{I}\left(\left(\kappa_{X}+\kappa_{Y}\right)T\right)\right),
\end{aligned}
\]

where
\begin{align*}
\omega_{0} & =\sum_{i=1}^{3}\left(\frac{\omega_{iX}}{\kappa_{X}T}+\frac{\omega_{iY}}{\kappa_{Y}T}\right)^{2},\omega_{X}=-2\sum_{i=1}^{3}\frac{\omega_{iX}}{\kappa_{X}T}\left(\frac{\omega_{iX}}{\kappa_{X}T}+\frac{\omega_{iY}}{\kappa_{Y}T}\right),\omega_{Y}=-2\sum_{i=1}^{3}\frac{\omega_{iY}}{\kappa_{Y}T}\left(\frac{\omega_{iX}}{\kappa_{X}T}+\frac{\omega_{iY}}{\kappa_{Y}T}\right),
\end{align*}

\[
\omega_{XX}=\sum_{i=1}^{3}\frac{\omega_{iX}^{2}}{\kappa_{X}^{2}T^{2}},\ \ \ \ \omega_{YY}=\sum_{i=1}^{3}\frac{\omega_{iY}^{2}}{\kappa_{Y}^{2}T^{2}},\ \ \ \ \omega_{XY}=2\sum_{i=1}^{3}\frac{\omega_{iX}\omega_{iY}}{\kappa_{X}\kappa_{Y}T^{2}},
\]

and
\[
\mathcal{I}(z)=\frac{1-e^{-z}}{z},~\mathcal{J}(z)=\frac{z-1+e^{-z}}{z^{2}},~\mathcal{K}(z)=\frac{1-e^{-z}-ze^{-z}}{z^{2}},~\mathcal{H}(z)=\frac{\mathcal{J}(z)-\mathcal{K}(z)}{z}.
\]
Similarly, we have $C^{\mu}=\alpha_{\theta}^{2}\omega^{2}\xi_{0}^{2}T^{3}\left(C_{1}^{\mu}+C_{2}^{\mu}\right)$,
where the coefficients
\[
\begin{aligned} & C_{1}^{\mu}=\frac{1}{2}\omega_{1X}^{2}\mathcal{H}\left(\kappa_{X}T\right)+\frac{1}{2}\omega_{1Y}^{2}\mathcal{H}\left(\kappa_{Y}T\right)-\omega_{1X}\omega_{1Y}\frac{\mathcal{J}\left(\kappa_{Y}T\right)-\mathcal{J}\left(\kappa_{X}T\right)}{\left(\kappa_{X}+\kappa_{Y}\right)T},\\
 & C_{2}^{\mu}=\omega''_{X}\mathcal{J}\left(\kappa_{X}T\right)+\omega''_{Y}\mathcal{J}\left(\kappa_{Y}T\right)+\omega''_{XX}\mathcal{J}\left(2\kappa_{X}T\right)+\omega''_{YY}\mathcal{J}\left(2\kappa_{Y}T\right)+\omega''_{XY}\mathcal{J}\left(\left(\kappa_{X}+\kappa_{Y}\right)T\right),
\end{aligned}
\]
and
\begin{eqnarray*}
\omega''_{X}=\frac{\omega_{1X}^{2}}{\kappa_{X}T}+\frac{\omega_{1X}\omega_{1Y}}{\kappa_{Y}T}, & \omega''_{Y}=\frac{\omega_{1Y}^{2}}{\kappa_{Y}T}+\frac{\omega_{1X}\omega_{1Y}}{\kappa_{Y}T},\\
\omega_{XX}''=-\frac{\omega_{1X}^{2}}{\kappa_{X}T}, & \omega_{YY}''=-\frac{\omega_{1Y}^{2}}{\kappa_{Y}T}, & \omega_{XY}''=-\frac{\omega_{1X}\omega_{1Y}}{\kappa_{X}T}-\frac{\omega_{1X}\omega_{1Y}}{\kappa_{Y}T}.
\end{eqnarray*}
Since $C^{X\xi}\sim T^{2}\left(C_{1}\cdot\frac{\kappa_{X}T-1+e^{-\kappa_{X}T}}{\left(\kappa_{X}T\right){}^{2}}+C_{2}\cdot\frac{\kappa_{Y}T-1+e^{-\kappa_{Y}T}}{\left(\kappa_{Y}T\right){}^{2}}\right)$
and $C_{1},C_{2}$ are constants, we can derive the term structure
of the ATM volatility skew as in equation (\ref{eq:2Dvolskew}) with
first order in $\varepsilon$.
\end{proof}
However, this result derived for the Bergomi model by the Bergomi-Guyon
expansion \citep{bergomi2012stochastic} is inconsistent with empirical
evidence, see for example \citet{bayer2016pricing}. This suggests
that the power-law kernel of the forward variance curve in the rBergomi
model will lead to more realistic and accurate pricing and hedging
results than the exponential kernel of the forward variance curve
in the Bergomi model.

\section{Markovian representation of the rough Bergomi model\label{sec:Markovian}}

The purpose of this section is to establish the infinite-dimensional
affine nature and Markovianity of the rBergomi model.

\begin{myDef}\label{def:OU} An Ornstein-Uhlenbeck (O-U) process
$Y^x_{t}$ is the solution of the following stochastic differential
equation (SDE):
\begin{equation}
dY_{t}^x=x(a-Y_{t}^x)dt+\sigma dB_{t}\label{eq:OU_SDE}
\end{equation}
where $x>0$ is the mean-reversion speed, $a>0$ is the mean-reversion
level, and $B_{s}$ is a standard Brownian motion. Its strong solution
is explicitly given by
\begin{equation}
Y_{t}^x=Y_{0}+\sigma\int_{0}^{t}e^{-x(t-s)}dB_{s}\label{eq:OU_solution}.
\end{equation}
\end{myDef}

\begin{assumption}
In the rest of the paper, we always assume that
\begin{align}
a & \triangleq Y_{0}\label{eq:OU0}\\
\sigma & \triangleq\eta\sqrt{2\alpha+1}\label{eq:OUsigma}
\end{align}
 where $\eta$ and $\alpha$ come from the Definition \ref{def:rBergomi}
of the rBergomi model (see \citealt{bayer2016pricing}).
\end{assumption}

\begin{myDef} \label{def:measure} Without loss of generality,
we define the sigma-finite measure $\mu(dx)$ on $(0,\infty)$ as
$\mu(dx)=\frac{dx}{x^{\frac{1}{2}+H}\Gamma(\frac{1}{2}-H)}$.
\end{myDef}

\subsection{Volterra-type integral as a functional of a Markov process}
\begin{thm}
\label{thm:integral_representation}Using Definitions \ref{def:OU}
and \ref{def:measure}, the Volterra type integral $\tilde{X_{t}}\triangleq\int_{0}^{t}(t-s)^{H-\frac{1}{2}}dB_{s}$
in the rBergomi model has the Markovian representation
\begin{equation}
\sigma\tilde{X_{t}}=\int_{0}^{\infty}\left(Y^x_{t}-Y^x_{0}\right)\mu(dx)\,.\label{eq:markovian_rep}
\end{equation}
\end{thm}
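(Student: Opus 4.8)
The plan is to substitute the explicit O-U solution into the right-hand side of \eqref{eq:markovian_rep} and reduce the claim to a deterministic kernel identity via a stochastic Fubini argument. Note first that $\tilde{X}_t$ is a well-defined Itô integral, since $\int_0^t (t-s)^{2H-1} ds < \infty$ for $H>0$.

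From \eqref{eq:OU_solution} one has, for each fixed $x>0$, the centered increment
\[
Y^x_t - Y^x_0 = \sigma \int_0^t e^{-x(t-s)} dB_s,
\]
so the right-hand side of \eqref{eq:markovian_rep} reads $\int_0^\infty \bigl( \sigma \int_0^t e^{-x(t-s)} dB_s \bigr) \mu(dx)$. The objective is to interchange the $\mu(dx)$-integral with the stochastic $dB_s$-integral, rewriting this as
\[
\sigma \int_0^t \left( \int_0^\infty e^{-x(t-s)} \mu(dx) \right) dB_s.
\]
The core deterministic computation is then the kernel identity
\[
\int_0^\infty e^{-x\tau} \mu(dx) = \tau^{\,H - \frac12}, \qquad \tau := t - s > 0,
\]
which I would obtain from the Gamma integral $\int_0^\infty e^{-x\tau} x^{\beta - 1}\, dx = \Gamma(\beta)\,\tau^{-\beta}$, valid for $\beta>0$. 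With $\mu(dx) = x^{-(\frac12 + H)}\,dx / \Gamma(\tfrac12 - H)$ this corresponds to $\beta = \tfrac12 - H$, which is positive precisely because $H \in (0, \tfrac12)$; the normalizing factor $\Gamma(\tfrac12 - H)$ in the definition of $\mu$ is chosen exactly to cancel and leave $\tau^{H - 1/2}$. Substituting this identity back yields $\sigma \int_0^t (t-s)^{H-1/2} dB_s = \sigma \tilde{X}_t$, which is the claim.

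The main obstacle is rigorously justifying the interchange of integrals, because $\mu$ is only $\sigma$-finite and in fact carries infinite total mass near $x = \infty$ (this is exactly why the centered quantity $Y^x_t - Y^x_0$, rather than $Y^x_t$ itself, must appear). I would invoke a stochastic Fubini theorem whose hypothesis reduces here, via the Itô isometry for the inner $L^2$-norm, to
\[
\int_0^\infty \left( \int_0^t e^{-2x(t-s)} ds \right)^{1/2} \mu(dx)
= \int_0^\infty \left( \frac{1 - e^{-2xt}}{2x} \right)^{1/2} \mu(dx) < \infty.
\]
Finiteness I would check by a two-sided asymptotic analysis of the integrand $\bigl(\frac{1-e^{-2xt}}{2x}\bigr)^{1/2} x^{-(\frac12+H)}$: near $x = 0$ it behaves like a constant times $x^{-(\frac12 + H)}$, integrable since $\tfrac12 + H < 1$; near $x = \infty$ it behaves like $x^{-(1 + H)}$, integrable since $1 + H > 1$. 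With this bound the stochastic Fubini theorem applies and the formal interchange above is legitimate, completing the argument.
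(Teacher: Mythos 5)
Your proposal is correct and follows essentially the same route as the paper's proof: recognize the kernel $(t-s)^{H-\frac{1}{2}}$ as the Laplace transform $\int_{0}^{\infty}e^{-x(t-s)}\mu(dx)$ via the Gamma integral, then interchange the $\mu(dx)$- and $dB_{s}$-integrals by stochastic Fubini and identify the inner stochastic integral with $Y_{t}^{x}-Y_{0}^{x}$. The only difference is that you explicitly verify the integrability hypothesis $\int_{0}^{\infty}\bigl(\frac{1-e^{-2xt}}{2x}\bigr)^{1/2}\mu(dx)<\infty$ needed for the interchange, which the paper leaves implicit by simply citing Protter; this is a welcome strengthening, not a different argument.
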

\begin{proof}
the Laplace transform of the measure $\mu$ in Definition \ref{def:measure}
is
\[
\mathcal{L}(\mu)(\tau)=\int_{0}^{\infty}e^{-\tau x}\mu(dx)=\int_{0}^{\infty}\frac{e^{-\tau x}x^{-\frac{1}{2}-H}}{\Gamma\left(\frac{1}{2}-H\right)}dx=\tau^{H-\frac{1}{2}}\,,
\]
which can be recognised as the power-law kernel in the Volterra type
integral. Consequently, we have $\sigma\tilde{X}_{t}=\int_{0}^{t}\int_{0}^{\infty}\sigma e^{-x(t-s)}\mu(dx)dB_{s}$,
and using Fubini's stochastic theorem \citep{protter2005stochastic},
we obtain $\sigma\tilde{X_{t}}=\int_{0}^{\infty}\int_{0}^{t}\sigma e^{-x(t-s)}dB_{s}\mu(dx)$.
From Definition \ref{def:OU}, where $\int_{0}^{t}\sigma e^{-x(t-s)}dB_{s}=Y^x_{t}-Y_{0}$,
we obtain the Markovian representation given by equation \ref{eq:markovian_rep}.
\end{proof}
\begin{thm}
\label{thm:OUaffine} The O-U process \eqref{eq:OU_solution} has
the affine structure
\begin{eqnarray*}
\mathbb{E}\left[e^{\int_{0}^{\infty}Y^x_{t}\mu(dx)}\mid\mathcal{F}_{s}\right] & = & e^{\frac{\sigma^{2}}{2}\int_{0}^{t-s}\left(\int_{0}^{\infty}e^{-sx}\mu(dx)\right)^{2}ds+\int_{0}^{\infty}Y^x_{s}e^{-(t-s)x}\mu(dx)}.
\end{eqnarray*}
\end{thm}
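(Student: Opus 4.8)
The plan is to reduce this conditional expectation to the moment generating function of a single Gaussian increment, exploiting that the O-U dynamics are linear with a deterministic, exponentially decaying kernel. First I would use the explicit solution \eqref{eq:OU_solution} to propagate $Y^x$ from the conditioning time $s$ to $t$: splitting the Wiener integral at $s$ and extracting the factor $e^{-x(t-s)}$ gives, for $t\ge s$,
\[
Y_t^x \;=\; Y_0 + e^{-x(t-s)}\left(Y_s^x - Y_0\right) + \sigma\int_s^t e^{-x(t-s')}\,dB_{s'}.
\]
Integrating against $\mu(dx)$ and interchanging the $x$-integral with the stochastic integral by the stochastic Fubini theorem --- exactly as in the proof of Theorem \ref{thm:integral_representation} --- I obtain
\[
\int_0^\infty\!\left(Y_t^x - Y_0\right)\mu(dx) \;=\; \underbrace{\int_0^\infty e^{-x(t-s)}\left(Y_s^x - Y_0\right)\mu(dx)}_{\mathcal{F}_s\text{-measurable}} \;+\; \underbrace{\sigma\int_s^t\!\left(\int_0^\infty e^{-x(t-s')}\mu(dx)\right)dB_{s'}}_{=:\,M_{s,t}}.
\]
Here $M_{s,t}$ is a Wiener integral of the deterministic kernel $s'\mapsto\int_0^\infty e^{-x(t-s')}\mu(dx)=(t-s')^{H-\frac12}$, the last equality being the Laplace-transform identity established for $\mu$ in Definition \ref{def:measure}; hence $M_{s,t}$ is a centred Gaussian variable, independent of $\mathcal{F}_s$ since it is built only from the Brownian increments on $[s,t]$.

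Next I would condition. Factoring out the $\mathcal{F}_s$-measurable exponent and using independence,
\[
\mathbb{E}\!\left[e^{\int_0^\infty(Y_t^x-Y_0)\mu(dx)}\,\middle|\,\mathcal{F}_s\right] = \exp\!\left(\int_0^\infty e^{-x(t-s)}(Y_s^x-Y_0)\,\mu(dx)\right)\mathbb{E}\!\left[e^{M_{s,t}}\right].
\]
The remaining factor is the Gaussian moment generating function $\mathbb{E}[e^{M_{s,t}}]=\exp\!\big(\tfrac12\,\mathrm{Var}\,M_{s,t}\big)$, and the It\^o isometry gives $\mathrm{Var}\,M_{s,t}=\sigma^2\int_s^t\big(\int_0^\infty e^{-x(t-s')}\mu(dx)\big)^2ds'$. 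The change of variable $r=t-s'$ turns this into $\sigma^2\int_0^{t-s}\big(\int_0^\infty e^{-rx}\mu(dx)\big)^2dr$, which is precisely the deterministic term in the claimed exponent; combined with the $\mathcal{F}_s$-measurable factor this produces the asserted affine (exponential-linear in $Y_s^x$) form.

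I expect the main obstacle to be interpretative rather than computational. Since the density $x^{-\frac12-H}$ is not integrable at $+\infty$, the measure $\mu$ has infinite total mass, so $\int_0^\infty Y_t^x\,\mu(dx)$ does not converge on its own; the statement must be read through the finite, recentred object $\int_0^\infty(Y_t^x-Y_0)\,\mu(dx)=\sigma\tilde X_t$ of Theorem \ref{thm:integral_representation}, with the constant $Y_0$-contributions treated consistently on both sides. My derivation matches the stated $\mathcal{F}_s$-exponent $\int_0^\infty Y_s^x e^{-(t-s)x}\mu(dx)$ only up to the finite deterministic term $Y_0(t-s)^{H-\frac12}$ arising from this recentring (note the extra factor $e^{-(t-s)x}$ on the conditioning side restores integrability at $+\infty$, which is exactly what the bare $Y_t^x$ term lacks), so pinning down which normalisation is intended is the one genuinely delicate bookkeeping step. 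The remaining analytic points are routine: the admissibility of stochastic Fubini and the $L^2([s,t])$-integrability of the kernel that makes $M_{s,t}$ a well-defined Gaussian both reduce, via the Laplace identity, to finiteness of $\int_0^{t-s}r^{2H-1}\,dr$, valid for $H\in(0,\tfrac12)$.
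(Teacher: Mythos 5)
Your proof is correct and follows essentially the same route as the paper's: conditional Gaussianity of $\int_0^\infty Y_t^x\,\mu(dx)$ via the stochastic Fubini theorem, the It\^o isometry for the conditional variance, and the Gaussian moment generating function --- your explicit split of $Y_t^x$ into an $\mathcal{F}_s$-measurable part plus an increment independent of $\mathcal{F}_s$ is just a more careful derivation of the conditional mean and variance that the paper asserts directly. Your bookkeeping caveat is also legitimate: the paper's statement and proof implicitly take $Y_0=0$ (consistent with the initial conditions $V_0^{n,i}=V_0=0$ used later and with the recentred object of Theorem \ref{thm:integral_representation}), since $\mu$ has infinite total mass, so for $Y_0\neq 0$ the left-hand exponent $\int_0^\infty Y_t^x\,\mu(dx)$ diverges and the correctly recentred conditional mean would differ from the stated one by the deterministic term $Y_0(t-s)^{H-\frac{1}{2}}$ you identify.
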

\begin{proof}
From Fubini's stochastic theorem, $\int_{0}^{\infty}Y^x_{t}\mu(dx)$
is Gaussian under the filtration $\mathcal{F}_{s}$ for $0\leq s\leq t$,
with mean
\[
\mathbb{E}\left[\int_{0}^{\infty}Y^x_{t}\mu(dx)\mid\mathcal{F}_{s}\right]=\int_{0}^{\infty}Y^x_{s}e^{-(t-s)x}\mu\left(dx\right).
\]
Furthermore, using It\=o's isometry, we have the conditional variance:
\[
\begin{aligned}\mathrm{Var}\left(\int_{0}^{\infty}Y^x_{t}\mu(dx)\mid\mathcal{F}_{s}\right) & =\sigma^{2}\int_{s}^{t}\left(\int_{0}^{\infty}e^{-(t-s)x}\mu(dx)\right){}^{2}ds\\
 & =\sigma^{2}\int_{0}^{t-s}\left(\int_{0}^{\infty}e^{-sx}\mu\left(dx\right)\right){}^{2}ds.
\end{aligned}
\]
Thus
\[
\begin{aligned}\mathbb{E}\left[e^{\int_{0}^{\infty}Y^x_{t}\mu(dx)}\mid\mathcal{F}_{s}\right] & =e^{\frac{1}{2}\mathrm{Var}\left(\int_{0}^{\infty}Y^x_{t}\mu(dx)\mid\mathcal{F}_{s}\right)+\mathbb{E}\left[\int_{0}^{\infty}Y^x_{t}\mu(dx)\mid\mathcal{F}_{s}\right]}\\
 & =e^{\frac{\sigma^{2}}{2}\int_{0}^{t-s}\left(\int_{0}^{\infty}e^{-sx}\mu(dx)\right){}^{2}ds+\int_{0}^{\infty}Y^x_{s}e^{-(t-s)x}\mu(dx)}.
\end{aligned}
\]
\end{proof}

\subsection{Affine structure in the rBergomi model}

From Definition \ref{def:rBergomi} and Theorem \ref{thm:integral_representation},
the rBergomi model can be rewritten in the following form:
\[
\left\{ \begin{aligned}dX_{t} & =-\frac{1}{2}V_{t}dt+\sqrt{V_{t}}dW_{t}\\
\log\frac{V_{t}}{\xi_{0}} & =\int_{0}^{\infty}(Y^x_{t}-Y_{0})\mu(dx)\,,
\end{aligned}
\right.
\]
where $X_{t}$ is the log stock price, $\xi_{0}$ is the initial flat
forward variance curve, and $W,B$ are two Brownian motions with correlation
$d\langle W,B\rangle_{t}=\rho dt$ and $\rho\in[-1,1]$.

Our aim is now to write the log stock price $X_{t}$ in affine form
as the first coordinate of an infinite-dimensional affine process.
To do so, we introduce the following symmetric nonnegative tensor:
\[
L^{1}(\mu)\otimes_{s}L^{1}(\mu)=\left\{ y^{\otimes2}:y\in L^{1}(\mu)\right\} \subset L^{1}(\mu)^{\otimes2}\subset L^{1}(\mu^{\otimes2})\,.
\]

Let $\Pi_{t}=(i\otimes1)\left(Y^x_{t}\right)^{\otimes2}\in iL^{1}(\mu)\otimes_{s}L^{1}(\mu)$.
The relation $\left(\int_{0}^{\infty}Y^x_{t}\mu(dx)\right)^{2}=\int_{0}^{\infty}(i\otimes1)(Y^x_{t})^{\otimes2}\mu^{\otimes2}(dx)$
holds.

Therefore, the log stock price dynamics can be written as
\[
\begin{aligned}dX_{t} & =\sqrt{\xi_{0}}\cdot\left(\mathcal{E}^{\frac{\int_{0}^{\infty}\Pi_{t}\mu^{\otimes2}(dx)}{4}}dW_{t}-\frac{1}{2}\mathcal{E}^{\int_{0}^{\infty}Y^x_{t}\mu(dx)}\right)\\
 & =\sqrt{\xi_{0}}e^{\frac{\int_{0}^{\infty}\Pi_{t}\mu^{\otimes2}(dx)}{4}}e^{-\frac{\eta^{2}}{4}t^{2\alpha+1}}dW_{t}-\frac{\sqrt{\xi_{0}}}{2}e^{\int_{0}^{\infty}Y^x_{t}\mu(dx)}e^{-\frac{\eta^{2}}{2}t^{2\alpha+1}}dt,
\end{aligned}
\]
 where $\mathcal{E}$ is the Dol\'eans-Dade stochastic exponential.
\begin{thm}
\label{thm:Pi_affine} The process $\Pi_{t}=(i\otimes1)(Y^x_{t})^{\otimes2}$
satisfies the affine structure

\begin{equation}
\mathbb{E}\left[e^{\int_{0}^{\infty}\Pi_{t}\mu^{\otimes2}(dx)}\mid\mathcal{F}_{s}\right]=e^{\Phi_{1}+\Phi_{2}}\label{eq:Phi}
\end{equation}

where
\begin{align}
\Phi_{1} & \triangleq-\frac{1}{2}\log\left(1-2\int_{0}^{t-s}\left(\int_{0}^{\infty}e^{-sx}\mu\left(dx\right)\right){}^{2}\right)ds\label{eq:Phi1}\\
\Phi_{2} & \triangleq\frac{\sigma^{2}\left(e^{-(t-s)x}\right){}^{\otimes2}}{1-2\int_{0}^{t-s}\left(\int_{0}^{\infty}e^{-sx}\mu\left(dx\right)\right){}^{2}}ds\,.\label{eq:Phi2}
\end{align}
\end{thm}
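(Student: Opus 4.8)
The plan is to collapse the infinite-dimensional statement to a one-dimensional Gaussian computation. By the relation recorded just before the theorem, $\int_{0}^{\infty}\Pi_{t}\mu^{\otimes2}(dx)=\bigl(\int_{0}^{\infty}Y^{x}_{t}\mu(dx)\bigr)^{2}$, so it suffices to evaluate $\mathbb{E}[e^{Z^{2}}\mid\mathcal{F}_{s}]$ where $Z\triangleq\int_{0}^{\infty}Y^{x}_{t}\mu(dx)$. Theorem \ref{thm:OUaffine} and its proof already establish that, conditional on $\mathcal{F}_{s}$, the variable $Z$ is Gaussian with explicit mean $m=\int_{0}^{\infty}Y^{x}_{s}e^{-(t-s)x}\mu(dx)$ and variance $v=\sigma^{2}\int_{0}^{t-s}\bigl(\int_{0}^{\infty}e^{-s'x}\mu(dx)\bigr)^{2}ds'$. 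Thus the whole problem reduces to the exponential-of-a-square moment of a conditionally Gaussian variable.

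First I would record the scalar identity: for $Z\sim\mathcal{N}(m,v)$ with $2v<1$,
\[
\mathbb{E}\bigl[e^{Z^{2}}\bigr]=(1-2v)^{-1/2}\exp\!\Bigl(\frac{m^{2}}{1-2v}\Bigr).
\]
This follows by writing the expectation as $\frac{1}{\sqrt{2\pi v}}\int_{\mathbb{R}}e^{z^{2}}e^{-(z-m)^{2}/(2v)}\,dz$, factoring $-\tfrac{1}{2v}$ out of the exponent and completing the square in $z$; the resulting quadratic coefficient $1-2v$ must be positive for convergence, and the Gaussian integration then produces both the prefactor $(1-2v)^{-1/2}$ and the shifted term $m^{2}/(1-2v)$. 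Since all quantities involved are $\mathcal{F}_{s}$-measurable, the identity applies verbatim under the conditional law, giving $\mathbb{E}[e^{Z^{2}}\mid\mathcal{F}_{s}]=(1-2v)^{-1/2}\exp(m^{2}/(1-2v))$.

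It then remains to substitute the explicit $m$ and $v$ and match the two factors against $e^{\Phi_{1}}$ and $e^{\Phi_{2}}$. The prefactor gives $\Phi_{1}=-\tfrac{1}{2}\log(1-2v)$, i.e. equation \eqref{eq:Phi1} once $v$ is written out. For $\Phi_{2}=m^{2}/(1-2v)$ I would re-express the squared mean in tensor form, using $m^{2}=\bigl(\int_{0}^{\infty}Y^{x}_{s}e^{-(t-s)x}\mu(dx)\bigr)^{2}=\int_{0}^{\infty}(Y^{x}_{s})^{\otimes2}\,(e^{-(t-s)x})^{\otimes2}\,\mu^{\otimes2}(dx)$, which is exactly the symmetric-tensor/Fubini manipulation already used to define $\Pi_{t}$; this reproduces equation \eqref{eq:Phi2}.

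The main obstacle is the integrability constraint rather than the algebra. The identity above is valid only when $2v<1$, and by the Laplace-transform computation of Theorem \ref{thm:integral_representation} the inner integral equals $\int_{0}^{\infty}e^{-s'x}\mu(dx)=(s')^{H-1/2}$, so $v=\sigma^{2}(t-s)^{2H}/(2H)$ is finite but grows with the horizon; hence the conditional exponential moment exists only for $t-s$ (equivalently $\sigma$) small enough, and this admissibility region must be stated as part of the conclusion. A secondary point requiring care is the rigorous justification of the conditional Gaussianity and of the stochastic-Fubini exchanges defining $Z$ and $\Pi_{t}$ as genuine elements of $iL^{1}(\mu)\otimes_{s}L^{1}(\mu)$, which I would inherit directly from Theorem \ref{thm:OUaffine} and the stochastic Fubini theorem of \citet{protter2005stochastic}.
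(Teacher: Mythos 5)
Your proposal is correct and takes essentially the same route as the paper: the paper also reduces everything to the conditional Gaussianity of $Z=\int_{0}^{\infty}Y^{x}_{t}\mu(dx)$ (inherited from Theorem \ref{thm:OUaffine}), identifies the normalized square $Z^{2}/v$ as a noncentral $\chi^{2}$ variable with one degree of freedom, and reads off $\Phi_{1}$ and $\Phi_{2}$ from its moment generating function --- which is precisely the complete-the-square identity $\mathbb{E}[e^{Z^{2}}]=(1-2v)^{-1/2}\exp\left(m^{2}/(1-2v)\right)$ that you derive by hand. Your explicit admissibility condition $2v<1$ (equivalently, $t-s$ or $\sigma$ small enough, since $v=\sigma^{2}(t-s)^{2H}/(2H)$) is a genuine point of rigor that the paper's proof passes over in silence.
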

\begin{proof}
From Fubini's stochastic theorem, $\frac{\int_{0}^{\infty}Y^x_{t}\mu(dx)}{\sigma\sqrt{\int_{0}^{t-s}(\int_{0}^{\infty}e^{-sx}\mu(dx))^{2}ds}}$
is Gaussian under the filtration $\mathcal{F}_{s}$ for $0\leq s\leq t$,
with conditional mean
\[
\mathbb{E}\left[\frac{\int_{0}^{\infty}Y^x_{t}\mu\left(dx\right)}{\sigma\sqrt{\int_{0}^{t-s}\left(\int_{0}^{\infty}e^{-sx}\mu(dx)\right){}^{2}ds}}\mid\mathcal{F}_{s}\right]=\frac{\int_{0}^{\infty}Y^x_{s}e^{-(t-s)x}\mu\left(dx\right)}{\sigma\sqrt{\int_{0}^{t-s}\left(\int_{0}^{\infty}e^{-sx}\mu\left(dx\right)\right){}^{2}ds}}
\]
and conditional variance
\[
\mathrm{Var}\left(\frac{\int_{0}^{\infty}Y^x_{t}\mu\left(dx\right)}{\sigma\sqrt{\int_{0}^{t-s}(\int_{0}^{\infty}e^{-sx}\mu(dx))^{2}ds}}\mid\mathcal{F}_{s}\right)=1.
\]
Then, the random variable defined as
\[
\frac{\int_{0}^{\infty}\Pi_{t}\mu^{\otimes2}\left(dx\right)}{\sigma^{2}\int_{0}^{t-s}\left(\int_{0}^{\infty}e^{-sx}\mu\left(dx\right)\right){}^{2}ds}=\left(\frac{\int_{0}^{\infty}Y^x_{t}\mu\left(dx\right)}{\sigma\sqrt{\int_{0}^{t-s}(\int_{0}^{\infty}e^{-sx}\mu\left(dx\right))^{2}ds}}\right){}^{2}
\]
is a noncentral $\chi^{2}$ distribution with one degree of freedom
and noncentrality parameter
\[
\frac{\left(\int_{0}^{\infty}Y^x_{s}e^{-(t-s)x}\mu\left(dx\right)\right){}^{2}}{\sigma^{2}\int_{0}^{t-s}\left(\int_{0}^{\infty}e^{-sx}\mu\left(dx\right)\right){}^{2}ds}=\frac{\int_{0}^{\infty}\Pi_{s}\left(e^{-(t-s)x}\right){}^{\otimes2}\mu^{\otimes2}\left(dx\right)}{\sigma^{2}\int_{0}^{t-s}\left(\int_{0}^{\infty}e^{-sx}\mu\left(dx\right)\right){}^{2}ds}.
\]

Thus the formulas \eqref{eq:Phi1} and \eqref{eq:Phi2} for $\Phi_{1}$
and $\Phi_{2}$ follow from the characteristic function of the noncentral
$\chi^{2}$ distribution, which concludes the proof.
\end{proof}
\begin{cor}
\label{cor:rBergomi_Markovian}The rBergomi model is an infinite-dimensional
Markovian process.
\end{cor}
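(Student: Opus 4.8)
The plan is to show that, although the rBergomi model is non-Markovian when regarded as a finite-dimensional process in $(X_t,V_t)$, it becomes Markovian once the state is lifted to include the entire family of Ornstein--Uhlenbeck processes $(Y^x_t)_{x>0}$ from Definition \ref{def:OU}. The corollary should then follow directly by assembling Theorems \ref{thm:integral_representation}, \ref{thm:OUaffine} and \ref{thm:Pi_affine}.

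First I would recall that, by Theorem \ref{thm:integral_representation}, the Volterra integral driving the forward variance admits the representation $\sigma\tilde{X}_t=\int_0^\infty(Y^x_t-Y^x_0)\mu(dx)$, so that $\log(V_t/\xi_0)$ is a deterministic linear functional of the current values $(Y^x_t)_{x>0}$ alone. The whole randomness of $V_t$ is thus encoded in the present state of this O-U family, with no explicit dependence on the past path of $B$. Next I would observe that each $Y^x_t$ solves the time-homogeneous SDE \eqref{eq:OU_SDE}, whose drift and diffusion coefficients depend only on the current value $Y^x_t$ and the fixed parameter $x$, with all components driven by the single Brownian motion $B$. Consequently the collection $(Y^x_t)_{x>0}$, viewed as a process valued in an infinite-dimensional state space such as $L^1(\mu)$, is Markovian: conditionally on $\mathcal{F}_s$ its future evolution depends on the past only through $(Y^x_s)_{x>0}$. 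Adjoining the log-price $X_t$, whose dynamics \eqref{eq:log_price_dynamics} involve only $V_t$ together with the correlated increment $dW_t$, the augmented process $(X_t,(Y^x_t)_{x>0})$ inherits the Markov property.

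Finally I would invoke Theorems \ref{thm:OUaffine} and \ref{thm:Pi_affine} to confirm that this Markov process is moreover affine: the conditional exponential functionals of $\int_0^\infty Y^x_t\mu(dx)$ and of $\int_0^\infty\Pi_t\mu^{\otimes2}(dx)$ computed there depend on $\mathcal{F}_s$ only through the current state $(Y^x_s)_{x>0}$, and they do so in exponential-affine form. This is precisely the defining feature of an affine Markov process and completes the identification of the rBergomi model as an infinite-dimensional affine Markovian process.

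I expect the main obstacle to be the rigorous set-up of the infinite-dimensional state space rather than any substantive new estimate. One must fix a function (or measure) space in which $(Y^x_t)_{x>0}$ takes values, verify that the stochastic Fubini interchange underlying Theorem \ref{thm:integral_representation} is valid so that $\int_0^\infty Y^x_t\mu(dx)$ is well-defined, and check that the lifted process admits a jointly measurable (indeed continuous) version for which the Markov property holds in the strong sense. The finite-dimensional intuition is immediate; essentially all of the care lies in this functional-analytic bookkeeping.
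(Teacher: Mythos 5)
Your proposal is correct and follows essentially the same route as the paper: both rest on lifting the rBergomi model to the infinite-dimensional family of O-U factors $(Y^x_t)_{x>0}$ and reading off Markovianity from the affine structure of Theorem \ref{thm:Pi_affine} (together with Theorems \ref{thm:integral_representation} and \ref{thm:OUaffine}). The only difference is that the paper's proof is a single sentence appealing to the exponential-affine dependence, whereas you spell out the state-space lifting, the Markov property of the O-U family, and the functional-analytic caveats that the paper leaves implicit.
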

\begin{proof}
This corollary follows from Theorem \ref{thm:Pi_affine} which exhibits
that the rBergomi model has an exponential-affine dependence on $x$,
hence the model is Markovian in each dimension.
\end{proof}

\section{Approximation of the rough Bergomi model by the aBergomi model\label{sec:rBergomi_approximation}}

In this Section, we first introduce the aBergomi model which is used
to approximate the rBergomi model \eqref{eq:rBergomi}. After that,
we will demonstrate the existence and uniqueness of the solution of
this aBergomi model. We also prove that the aBergomi model is well-defined
and the solution of the aBergomi model converges to that of the rBergomi
model when the number of terms $n$ in the aBergomi model goes to
infinity. At the same time, we show that the rBergomi model inherits
the affine structure of the Bergomi model.

Since the rBergomi model can be represented by
\[
\left\{ \begin{aligned}dS_{t} & =S_{t}\sqrt{V_{t}}dW_{t}\\
\log\left\{ \frac{V_{t}}{\xi_{0}}\right\}  & =\int_{0}^{\infty}\sigma\int_{0}^{t}e^{-x(t-s)}dB_{s}\mu(dx)
\end{aligned}
\right.
\]
and the $n$-term Bergomi model with the same Brownian motion in the
variance process can be represented by
\begin{equation}
\left\{ \begin{aligned}dS_{t} & =S_{t}\sqrt{V_{t}}dW_{t}\\
\log\left\{ \frac{V_{t}}{\xi_{0}}\right\}  & =\int_{0}^{t}\left(\sum_{i=1}^{n}\alpha_{i}e^{-\kappa_{i}\left(t-s\right)}\right)dB_{s},
\end{aligned}
\right.\label{eq:aBergomi}
\end{equation}
we can view the rBergomi model as a continuous infinite-term Bergomi
model under the measure $\mu(\cdot)$, in which the mean-reversion
speed $x$ has been integrated from $0$ to $\infty$, with the Brownian
motion $B_{s}$. We can therefore approximate the rBergomi model by
a $n$-term exponential kernel $K_{\text{exp}}=\sum_{i=1}^{n}\alpha_{i}e^{-\kappa_{i}(t-s)}$
instead of the power kernel $K_{\text{pow}}=\sqrt{2\alpha+1}(t-s)^{\alpha}$
of the Volterra process in the rBergomi model.

Following equation (\ref{eq:aBergomi}), after approximating
the exponential kernel $K(\tau)=\int_{0}^{\infty}e^{-x\tau}\mu(dx)$
by the kernel $K^{n}(\tau)=\sum_{i=1}^{n}\alpha_{i}^{n}e^{-\tau x_{i}^{n}}$
, we can rewrite the aBergomi model \eqref{eq:aBergomi} as follows:

\begin{equation}
\left\{ \begin{aligned}dS_{t}^{n} & =S_{t}^{n}\sqrt{V_{t}^{n}}dW\\
\log\left\{ \frac{V_{t}^{n}}{\xi_{0}}\right\}  & =\sum_{i=1}^{n}\alpha_{i}^{n}V_{t}^{n,i}\\
dV_{t}^{n,i} & =-x_{i}^{n}\left(a-V_{t}^{n}\right)dt+\sigma dB_{t}~~a=Y_{0},~\sigma=\eta\sqrt{2\alpha+1},
\end{aligned}
\right.\label{eq:aBergomi_S1}
\end{equation}
where $(\alpha_{i}^{n})_{1\leq i\leq n}$ are positive weights, $(x_{i}^{n})_{1\leq i\leq n}$
are mean-reverting speeds, and $\langle W,B\rangle_{t}=\rho dt$,
with initial conditions $S_{0}^{n}=S_{0}=1$ and $V_{0}^{n,i}=V_{0}=0$.

\subsection{Existence and uniqueness of $(S^{n},V^{n})$}

We rewrite $V^{n}$ in \eqref{eq:aBergomi_S1} as the following stochastic
equation
\begin{equation}
\log\left(\frac{V_{t}^{n}}{\xi_{0}}\right)=\sigma\int_{0}^{t}K^{n}\left(t-s\right)dB_{s}.\label{eq:Vn}
\end{equation}

\begin{thm}
Under the conditions of the model \eqref{eq:aBergomi_S1}, there exists
a unique strong non-negative solution $V^{n}$ to equation \eqref{eq:Vn}.
\end{thm}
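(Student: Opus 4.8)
The plan is to use the fact that, for a \emph{finite} $n$-term exponential kernel, equation \eqref{eq:Vn} is not genuinely implicit in $V^n$: it prescribes $\log(V_t^n/\xi_0)$ as a fixed Gaussian stochastic integral, so the solution can be written down explicitly and existence, uniqueness and positivity all follow at once. The natural candidate is
\[
V_t^n = \xi_0 \exp\!\left(\sigma \int_0^t K^n(t-s)\,dB_s\right), \qquad K^n(\tau)=\sum_{i=1}^n \alpha_i^n e^{-\tau x_i^n},
\]
and the entire proof reduces to showing that this right-hand side is a well-defined adapted continuous process and that it is the unique process satisfying \eqref{eq:Vn}.

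The first step is to check that the convolution-type Wiener integral $I_t := \int_0^t K^n(t-s)\,dB_s$ is well-defined with continuous paths. For fixed $t$ the integrand $s\mapsto K^n(t-s)$ is deterministic, continuous and bounded on $[0,t]$, hence in $L^2([0,t])$, so each It\^o integral exists. To control the awkward double appearance of $t$ (in the upper limit and inside the kernel), I would factorise each term as
\[
\int_0^t e^{-x_i^n(t-s)}\,dB_s = e^{-x_i^n t}\int_0^t e^{x_i^n s}\,dB_s =: e^{-x_i^n t} M_t^i,
\]
where $M^i$ is a genuine continuous square-integrable martingale (its integrand $e^{x_i^n s}$ being bounded on compacts). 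Thus $I_t = \sum_{i=1}^n \alpha_i^n e^{-x_i^n t} M_t^i$ is a finite sum of products of continuous deterministic functions and continuous martingales, so it is adapted and pathwise continuous; this is precisely where the finiteness of $n$ and the Markovian O-U structure of \eqref{eq:aBergomi_S1} enter. Applying It\^o's formula to $e^{-x_i^n t} M_t^i$ moreover verifies that each factor $V_t^{n,i}$ solves the O-U dynamics of Definition \ref{def:OU} used in \eqref{eq:aBergomi_S1}, and that the weighted sum $\sum_i \alpha_i^n V_t^{n,i}$ reproduces $\sigma I_t$, so the integral formulation \eqref{eq:Vn} is consistent with the coupled system \eqref{eq:aBergomi_S1}.

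With $I$ in hand I would set $V_t^n := \xi_0 e^{\sigma I_t}$. Existence is then immediate, since $V^n$ is adapted and continuous and satisfies \eqref{eq:Vn} by construction, while non-negativity --- in fact strict positivity --- is automatic because $\xi_0>0$ and the exponential never vanishes, so the prescribed dynamics cannot drive $V^n$ to zero. For uniqueness, any other solution $\tilde V^n$ of \eqref{eq:Vn} obeys $\log(\tilde V_t^n/\xi_0)=\sigma I_t=\log(V_t^n/\xi_0)$ almost surely for each $t$; since $\log$ is a bijection from $(0,\infty)$ onto $\mathbb{R}$ this forces $\tilde V_t^n = V_t^n$ almost surely, and continuity of both modifications upgrades this to pathwise (strong) uniqueness.

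I do not anticipate a deep obstacle, precisely because truncating to $n$ terms makes the problem genuinely finite-dimensional and Markovian. The one point demanding care is the well-definedness and continuity of the convolution Wiener integral $I_t$, where $t$ sits simultaneously in the limit and in the kernel; the factorisation $e^{-x_i^n(t-s)}=e^{-x_i^n t}e^{x_i^n s}$ disposes of this by reducing each term to a deterministic time-scaling of a single fixed continuous martingale, after which positivity (from the exponential map) and uniqueness (from its injectivity) come for free.
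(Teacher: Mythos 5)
Your proof is correct, and it takes a genuinely different route from the paper's. The paper disposes of this theorem in a single line, by invoking the general theory of stochastic Volterra equations of {\O}ksendal and Zhang (1993). Your argument instead rests on the observation that equation \eqref{eq:Vn} is not a genuine Volterra equation at all: the unknown $V^{n}$ never appears on the right-hand side, so the equation explicitly prescribes $V_{t}^{n}=\xi_{0}\exp\left(\sigma\int_{0}^{t}K^{n}(t-s)\,dB_{s}\right)$, and existence, strict positivity and pathwise uniqueness follow by direct construction; the only technical point --- well-definedness and continuity of the convolution Wiener integral, where $t$ appears both as the upper limit and inside the kernel --- is settled by the factorisation $e^{-x_{i}^{n}(t-s)}=e^{-x_{i}^{n}t}e^{x_{i}^{n}s}$, which reduces each term to a deterministic rescaling of a single continuous martingale. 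What your approach buys is self-containedness and extra information: continuity of paths, strict positivity (not merely non-negativity), and the explicit identification of the $n$ O-U factors $V^{n,i}$ of \eqref{eq:aBergomi_S1}, which is precisely the Markovian structure this section of the paper is meant to exhibit. What the paper's citation buys is brevity, though arguably at the cost of precision: the fixed-point machinery for stochastic Volterra equations (designed for equations where the unknown sits under the stochastic integral) is not actually needed here, since the right-hand side of \eqref{eq:Vn} is a fixed Gaussian functional of $B$; in that sense your elementary construction is the more direct and more transparent proof of the stated result.
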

\begin{proof}
\citet{oksendal1993stochastic} implies that there exists a unique
strong non-negative solution $V^{n}$ to equation \eqref{eq:Vn} under
the conditions of the model \eqref{eq:aBergomi_S1} .
\end{proof}
Then the strong existence and uniqueness of $(S^{n},V^{n})$ follows,
along with its Markovianity w.r.t. the spot price $S^{n}$ and the
factors $V^{n,i}$ for $i\in\{1,\cdots,n\}$.

\subsection{Convergence of $(S^{n},V^{n})$ to $(S,V)$}

To prove that the solution of the aBergomi model $\left(S^{n},V^{n}\right)$
converges to the solution of the rBergomi model $(S,V)$, we need
to choose a suitable $K^{n}(\tau)=\sum_{i=1}^{n}\alpha_{i}^{n}e^{-x_{i}^{n}\tau}$
to approximate $K(\tau)=\tau^{H-\frac{1}{2}}$. When $n\rightarrow+\infty$,
$(V^{n})_{n\geq1}\rightarrow V$ (see \citealt{carmona2000approximation},
\citealt{muravlev2011representation}, \citealt{harms2019affine}).
\begin{thm}
\label{thm:kernel_convergence} There exist weights $(\alpha_{i}^{n})_{1\leq i\leq n}>0$
and mean reversion speeds $(x_{i}^{n})_{1\leq i\leq n}>0$, such that
$\parallel K^{n}-K\parallel_{2,T}\rightarrow0$, where $\|\cdot\|_{2,T}$
is the $L^{2}([0,T],\mathbb{R})$ norm.
\end{thm}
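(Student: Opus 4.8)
The plan is to approximate the completely monotone kernel $K(\tau)=\tau^{H-\frac{1}{2}}$ by finite sums of exponentials, exploiting the integral representation already established in Theorem \ref{thm:integral_representation}. Recall that $K(\tau)=\int_{0}^{\infty}e^{-x\tau}\mu(dx)$ with $\mu(dx)=\frac{dx}{x^{\frac{1}{2}+H}\Gamma(\frac{1}{2}-H)}$. Since the desired finite-term kernel $K^{n}(\tau)=\sum_{i=1}^{n}\alpha_{i}^{n}e^{-x_{i}^{n}\tau}$ is nothing but a quadrature approximation of this integral, the natural strategy is to choose the nodes $x_{i}^{n}>0$ and positive weights $\alpha_{i}^{n}>0$ as a quadrature rule for the measure $\mu$ on $(0,\infty)$, so that $K^{n}(\tau)\to K(\tau)$ and then upgrade pointwise (or locally uniform) convergence to convergence in the $L^{2}([0,T])$ norm.

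First I would fix a large truncation level and split the integral $\int_{0}^{\infty}e^{-x\tau}\mu(dx)$ into a tail piece $\int_{M}^{\infty}$ and a bulk piece $\int_{0}^{M}$, where $M=M(n)\to\infty$. For the tail, integrability of $\mu$ against $e^{-x\tau}$ near infinity gives a bound that is uniform in $\tau\in[0,T]$ and vanishes as $M\to\infty$; for the singularity at $x=0$, note that $\mu(dx)$ is integrable near zero precisely because $\frac{1}{2}+H<1$, so a second small-$x$ cutoff can be controlled similarly. On the bulk region $[\,\varepsilon,M\,]$, I would discretize the integral by a Riemann-type or Gaussian quadrature rule adapted to $\mu$, setting $\alpha_{i}^{n}=\mu([x_{i-1},x_{i}))>0$ at nodes $x_{i}^{n}$; the smoothness of $x\mapsto e^{-x\tau}$ guarantees the quadrature error tends to zero as the mesh is refined, uniformly for $\tau$ in the compact interval $[0,T]$. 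This produces exactly the required positive weights and positive speeds.

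Next I would convert the resulting uniform (or pointwise with a dominating function) convergence $K^{n}(\tau)\to K(\tau)$ into the claimed $L^{2}([0,T])$ bound. The one subtlety is that $K(\tau)=\tau^{H-\frac{1}{2}}$ has an integrable singularity at $\tau=0$ since $2(H-\frac{1}{2})=2H-1>-1$, so $K\in L^{2}([0,T])$ and one must verify that the approximants $K^{n}$ stay dominated near $\tau=0$. Because every $K^{n}$ is a quadrature approximation of the same completely monotone integrand and the weights are nonnegative, one has $0\le K^{n}(\tau)\le C\,\tau^{H-\frac{1}{2}}$ uniformly in $n$ for $\tau$ bounded away from $0$, and near $\tau=0$ the $K^{n}$ are bounded by the total mass on the truncated region; a dominated-convergence argument on $[0,T]$ then yields $\|K^{n}-K\|_{2,T}\to 0$. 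Alternatively, one may cite the existing approximation results of \citet{carmona2000approximation}, \citet{muravlev2011representation} and \citet{harms2019affine}, which establish precisely such $L^{2}$-approximation of completely monotone kernels by sums of exponentials with positive coefficients.

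I expect the main obstacle to be the behaviour near the endpoint $\tau=0$: the kernel is unbounded there, quadrature rules naturally approximate it worst at the singularity, and the whole statement is an $L^{2}$ (not $L^{\infty}$) claim precisely because uniform approximation on $[0,T]$ is impossible. Handling this cleanly requires either a carefully graded mesh near $x=\infty$ (the Laplace-dual of small $\tau$) together with the $\mu$-tail estimate above, or a direct invocation of the cited completely-monotone approximation theorems, which already build in the correct treatment of the singularity; the remaining tail and quadrature estimates are then routine.
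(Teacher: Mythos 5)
Your high-level strategy---discretizing the Laplace representation $K(\tau)=\int_{0}^{\infty}e^{-x\tau}\mu(dx)$ over a partition $0=p_{0}^{n}<p_{1}^{n}<\cdots<p_{n}^{n}$, taking cell-mass weights $\alpha_{i}^{n}=\int_{p_{i-1}^{n}}^{p_{i}^{n}}\mu(dx)$ with nodes $x_{i}^{n}$ inside the cells, and splitting off a tail beyond a level growing with $n$---is exactly the paper's. But the two estimates you use to close the argument fail as stated, and they fail precisely at the point the theorem is about. First, the tail: $\mu$ is \emph{not} a finite measure at infinity, since its density $x^{-\frac{1}{2}-H}/\Gamma(\tfrac{1}{2}-H)$ decays slower than $x^{-1}$ (because $\tfrac{1}{2}+H<1$). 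Hence $\int_{M}^{\infty}e^{-x\tau}\mu(dx)=+\infty$ at $\tau=0$, and no bound on this tail can be ``uniform in $\tau\in[0,T]$'' and vanish as $M\to\infty$; this also contradicts your own (correct) closing remark that uniform approximation on $[0,T]$ is impossible. The paper handles the tail by putting the $L^{2}$ norm \emph{inside} the $\mu$-integral: $\Vert e^{-x(\cdot)}\Vert_{2,T}=\sqrt{(1-e^{-2xT})/(2x)}\le(2x)^{-1/2}$, so the tail integrand behaves like $x^{-1-H}$, which is integrable at infinity and gives the explicit bound $(p_{n}^{n})^{-H}/(\sqrt{2}H\Gamma(\tfrac{1}{2}-H))$. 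Second, your dominated-convergence step: near $\tau=0$ you propose to dominate $K^{n}$ by ``the total mass on the truncated region'', but $\mu\bigl((\varepsilon,M(n)]\bigr)\to\infty$ as $n\to\infty$, so this is not an $n$-uniform dominating function, and the passage from pointwise to $L^{2}([0,T])$ convergence is left open exactly where $K$ blows up. This gap is repairable: with nodes $x_{i}^{n}\in[p_{i-1}^{n},p_{i}^{n}]$ and mesh $\pi_{n}$ one has $e^{-x_{i}^{n}\tau}\le e^{\pi_{n}\tau}e^{-x\tau}$ for every $x$ in the cell, hence $K^{n}(\tau)\le e^{\pi_{n}T}K(\tau)$ on $[0,T]$; since $K\in L^{2}([0,T])$ (as $2H-1>-1$), dominated convergence applied to $|K^{n}-K|^{2}$ then yields the claim. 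You should state such a comparison explicitly; without it your pointwise statements do not imply the $L^{2}$ conclusion.

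It is also worth noting what the paper's argument buys that your (repaired) qualitative route does not: a rate. The paper chooses the nodes by a moment-matching condition within each cell so that, after a Taylor expansion of the exponential, the cell error is bounded by $\tfrac{t^{2}}{2}\int_{p_{i-1}^{n}}^{p_{i}^{n}}(x-x_{i}^{n})^{2}\mu(dx)$; with the uniform mesh $p_{i}^{n}=i\pi_{n}$ this sums to an explicit quantity, and optimizing $\pi_{n}\sim n^{-1/5}$ against the tail term yields $\Vert K^{n}-K\Vert_{2,T}\le Cn^{-4H/5}$ together with closed-form expressions for $\alpha_{i}^{n}$ and $x_{i}^{n}$. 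Your fallback of citing \citet{carmona2000approximation} or \citet{harms2019affine} is legitimate but reduces the proof to a reference, whereas the point of the paper's appendix is to exhibit concrete admissible weights and speeds with a quantitative error bound that can guide the numerical choice of $n$.
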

The proof of this theorem is in the Appendix.

Applying the previous computations and the Kolmogorov tightness criterion,
we can get that the sequence $\left(S^{n},V^{n}\right)$ is tight
for the uniform topology and the limit satisfies the model \eqref{eq:aBergomi_S1}.

\subsection{Affine structure of the aBergomi model}

In this section, we detail the affine property of the aBergomi model.
\begin{thm}
The process $V^{n}$ (equation \eqref{eq:Vn}) has the following affine
structure
\[
\mathbb{E}\left[V_{t}^{n}\mid\mathcal{F}_{s}\right]=\xi_{0}\exp\left\{ \frac{\sigma^{2}}{2}\sum_{i=1}^{n}\alpha_{i}^{n}\left(\frac{1}{x_{i}^{n}}-\frac{e^{-(t-s)x_{i}^{n}}}{x_{i}^{n}}\right)+\sum_{i=1}^{n}V_{s}^{n,i}\alpha_{i}^{n}e^{-(t-s)x_{i}^{n}}\right\}
\]
\end{thm}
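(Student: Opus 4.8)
The plan is to exploit the fact that, by \eqref{eq:Vn}, the process $V_t^n$ is the exponential of a Wiener integral, $V_t^n = \xi_0 \exp\{Z_t^n\}$ with $Z_t^n \triangleq \sigma\int_0^t K^n(t-u)\,dB_u$, so that conditionally on $\mathcal{F}_s$ the exponent $Z_t^n$ is Gaussian. For a random variable that is Gaussian given $\mathcal{F}_s$, the log-normal identity $\mathbb{E}[e^{Z_t^n}\mid\mathcal{F}_s] = \exp\{\mathbb{E}[Z_t^n\mid\mathcal{F}_s] + \tfrac12\mathrm{Var}(Z_t^n\mid\mathcal{F}_s)\}$ holds, so the whole statement reduces to computing the conditional mean and the conditional variance of $Z_t^n$. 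This is the finite-dimensional counterpart of Theorem \ref{thm:OUaffine}, and I would structure the proof to parallel it.

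First I would split the Wiener integral at time $s$, writing $Z_t^n = \sigma\int_0^s K^n(t-u)\,dB_u + \sigma\int_s^t K^n(t-u)\,dB_u$, and use the semigroup structure of the exponential kernel, $K^n(t-u) = \sum_{i=1}^n \alpha_i^n e^{-x_i^n(t-s)}e^{-x_i^n(s-u)}$. The first integral is $\mathcal{F}_s$-measurable, and since $V_s^{n,i} = \sigma\int_0^s e^{-x_i^n(s-u)}\,dB_u$ is exactly the O-U factor of \eqref{eq:aBergomi_S1} (the representation \eqref{eq:OU_solution} with vanishing initial value), it collapses to the conditional mean $\mathbb{E}[Z_t^n\mid\mathcal{F}_s] = \sum_{i=1}^n \alpha_i^n e^{-(t-s)x_i^n} V_s^{n,i}$, which is precisely the linear term in the exponent of the claimed formula.

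The second integral $\sigma\int_s^t K^n(t-u)\,dB_u$ has increments independent of $\mathcal{F}_s$ and mean zero, so it carries the entire conditional variance. By It\^o's isometry this variance equals $\sigma^2\int_s^t \big(K^n(t-u)\big)^2\,du$; the key scalar computation is the change of variable $\int_s^t e^{-x_i^n(t-u)}\,du = \tfrac{1}{x_i^n}\big(1 - e^{-x_i^n(t-s)}\big) = \tfrac{1}{x_i^n} - \tfrac{e^{-(t-s)x_i^n}}{x_i^n}$, which produces exactly the bracketed factors appearing in the statement. Substituting the conditional mean and one half the conditional variance into the log-normal identity, and multiplying by $\xi_0$, then yields the claimed affine structure.

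I expect the only genuine subtlety to lie in the bookkeeping of the conditional variance: because all the factors $V^{n,i}$ are driven by the same Brownian motion $B$, the square $\big(K^n\big)^2$ expands into a double sum $\sum_{i,j}\alpha_i^n\alpha_j^n e^{-(x_i^n+x_j^n)(t-u)}$ rather than a single sum, so I would have to be careful about how the cross terms are organised in the final expression. This is the step where the stated single-sum form must be reconciled with the full isometry computation (for instance by the same grouping into $\mathcal{I}$-type integrals used in the Bergomi skew analysis, or by identifying the intended linearised variance $\tfrac{\sigma^2}{2}\int_s^t K^n(t-u)\,du$). Everything else---$\mathcal{F}_s$-measurability of the first integral, independence and centredness of the second, and the elementary exponential integrals---is routine.
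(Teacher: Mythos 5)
Your route is the same as the paper's: the paper's proof simply invokes Theorem \ref{thm:OUaffine}, which is exactly the conditional-lognormal identity plus It\^o isometry argument you spell out, and your identification of the conditional mean $\sum_{i=1}^{n}\alpha_{i}^{n}e^{-(t-s)x_{i}^{n}}V_{s}^{n,i}$ agrees with the paper's linear term. The substantive point is the one you raise in your final paragraph, and you should not treat it as a reconcilable bookkeeping subtlety: it is a genuine inconsistency in the paper. The conditional variance is, as you compute,
\[
\sigma^{2}\int_{0}^{t-s}\bigl(K^{n}(u)\bigr)^{2}\,du
=\sigma^{2}\sum_{i,j=1}^{n}\alpha_{i}^{n}\alpha_{j}^{n}\,\frac{1-e^{-(x_{i}^{n}+x_{j}^{n})(t-s)}}{x_{i}^{n}+x_{j}^{n}},
\]
a double sum with cross terms, whereas the bracketed single sum in the statement equals $\int_{0}^{t-s}K^{n}(u)\,du$, i.e.\ the kernel enters unsquared; the two expressions differ already for $n=1$. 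Inspecting the paper's own proof shows exactly where this happens: its first display line correctly carries $\frac{\sigma^{2}}{2}\int_{0}^{t-s}\left(K^{n}(s)\right)^{2}ds$ (as Theorem \ref{thm:OUaffine} dictates), but in the second line the square is silently dropped, and the third line then integrates the unsquared kernel term by term to produce the stated bracket. So your isometry computation is the correct one, the theorem as printed is not the conditional expectation of $V_{t}^{n}$, and the fix is to replace the single sum by your double sum (equivalently, to keep $\frac{\sigma^{2}}{2}\int_{0}^{t-s}\bigl(K^{n}(u)\bigr)^{2}du$ in the exponent). Apart from this, your argument is complete: neither of the two ``reconciliations'' you tentatively offer is needed or possible, since no regrouping of cross terms can turn the double sum into the stated single sum.
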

\begin{proof}
Using Theorem \ref{thm:OUaffine}, we have
\[
\begin{aligned}\mathbb{E}\left[V_{t}^{n}\mid\mathcal{F}_{s}\right] & =\xi_{0}\exp\left\{ \frac{\sigma^{2}}{2}\int_{0}^{t-s}\left(K^{n}(s)\right){}^{2}ds+\sum_{i=1}^{n}V_{s}^{n,i}\alpha_{i}^{n}e^{-(t-s)x_{i}^{n}}\right\} \\
 & =\xi_{0}\exp\left\{ \frac{\sigma^{2}}{2}\int_{0}^{t-s}\left(\sum_{i=1}^{n}\alpha_{i}^{n}e^{-sx_{i}^{n}}\right)ds+\sum_{i=1}^{n}V_{s}^{n,i}\alpha_{i}^{n}e^{-(t-s)x_{i}^{n}}\right\} \\
 & =\xi_{0}\exp\left\{ \frac{\sigma^{2}}{2}\sum_{i=1}^{n}\alpha_{i}^{n}\left(\frac{1}{x_{i}^{n}}-\frac{e^{-(t-s)x_{i}^{n}}}{x_{i}^{n}}\right)+\sum_{i=1}^{n}V_{s}^{n,i}\alpha_{i}^{n}e^{-(t-s)x_{i}^{n}}\right\} ,
\end{aligned}
\]
Similarly we can derive the affine structure of $S^{n}$ by Theorem
\ref{thm:Pi_affine}.
\end{proof}

\section{Numerical method\label{sec:numerical_method}}

In this section, we first introduce the hybrid scheme and algorithm
to approximate an rBergomi model by an aBergomi model. And then, we
compare the simulated volatilities of both models. To demonstrate
the approximation accuracy and efficiency, we investigate the RMSE
of simulated results for different number of terms and number of time
steps in numerical tests. By some improved algorithms, we observe
that 25-term O-U process and 100 time steps can produce a good output,
with reliable outcomes and fast calculation speed under 20000 Monte Carlo paths.

\subsection{Hybrid scheme for simulation}

Recalling equation \eqref{eq:rBergomi}, the rough Bergomi model with
time horizon $T>0$ under an equivalent martingale measure identified
with $\mathbb{P}$ can be written as:

\begin{equation}
\left\{ \begin{aligned}dS_{t} & =S_{t}\sqrt{V_{t}}dW_{t}\\
\frac{d\xi_{s}^{t}}{\xi_{s}^{t}} & =\eta\sqrt{2\alpha+1}(t-s)^{\alpha}dB_{s},
\end{aligned}
\right.\label{eq:rBergomi2}
\end{equation}
where $W,B$ are two standard Brownian motions with correlation $\rho$.
We recall Assumption \ref{assu:initial_curve} that the forward variance
curve $\xi_{0}^{t}$ is flat for all $t\in[0,T]$ : $\xi_{0}^{t}=\xi_{0}>0$
Thus, the spot variance $V_{t}$ in Equation \eqref{eq:rBergomi2}
is given by
\[
V_{t}=\xi_{0}\exp\left(\eta\sqrt{2\alpha+1}\int_{0}^{t}(t-s)^{\alpha}dB_{s}-\frac{\eta^{2}}{2}t^{2\alpha+1}\right).
\]

To simulate the Volterra-type integral $\tilde{X}=\sqrt{2\alpha+1}\int_{0}^{t}(t-s)^{\alpha}dB_{s}$,
we apply the hybrid scheme proposed in \citet{bennedsen2017hybrid},
which approximates the kernel function of the Brownian semi-stationary
processes by a Wiener integrals of the power function at $t=s$ and
a Riemann sum elsewhere.

Let $\left(\Omega,\mathcal{F},\left(\mathcal{F}_{t}\right)_{t\in\mathbb{R}},\mathbb{P}\right)$
be a filtered probability space which supports a standard Brownian
motion $W=\left(W_{t}\right)_{t\in\mathbb{R}}$. We consider a Brownian
semi-stationary process (\textit{Bss}):
\begin{equation}
\bar{X}_{t}=\int_{-\infty}^{t}g(t-s)\sigma_{s}dW_{s}~~t\in\mathbb{R}\,,\label{eq:Bss}
\end{equation}
where $\sigma=(\sigma_{t})_{t\in\mathbb{R}}$ is an $(\mathcal{F}_{t})_{t\in\mathbb{R}}$-predictable
process which captures the stochastic volatility of $\bar{X}$ and
$g:(0,\infty)\rightarrow[0,\infty)$ is a Borel-measurable kernel
function. We assume that $\mathbb{E}\left[\sigma_{t}^{2}\right]<\infty$
for all $t\in\mathbb{R}$ and the process is covariance-stationary,
namely
\[
\begin{aligned}\mathbb{E}\left[\sigma_{s}\right] & =\mathbb{E}\left[\sigma_{t}\right]\\
\mathrm{cov}\left(\sigma_{s},\sigma_{t}\right) & =\mathrm{cov}\left(\sigma_{0},\sigma_{|s-t|}\right),~~s,t\in\mathbb{R}.
\end{aligned}
\]
These assumptions imply that $\bar{X}$ is covariance-stationary.
However, the process $\bar{X}$ need not be strictly stationary.
\begin{assumption}
\label{assu:kernel_assumptions}The assumptions concerning the kernel
function $g$ are as follows:
\begin{description}
\item [{(A1)}] For some $\alpha\in\left(-\frac{1}{2},\frac{1}{2}\right)\backslash\{0\}$,
\[
g(x)=x^{\alpha}L_{g}(x),~~x\in(0,1],
\]
where $L_{g}:(0,1]\rightarrow[0,\infty)$ is continuously differentiable,
slowly varying at 0 and bounded away from 0. Moreover, there exists
a constant $C>0$ such that the derivative $L'_{g}$ of $L_{g}$ satisfies
\[
|L'_{g}(x)|\leq C\left(1+\frac{1}{x}\right),~~x\in(0,1].
\]
\item [{(A2)}] The function $g$ is continuously differentiable on $(0,\infty)$,
with derivative $g'$ that is ultimately monotonic and also satisfies
$\int_{1}^{\infty}g'(x)^{2}dx<\infty$.
\item [{(A3)}] For some $\beta\in\left(-\infty,-\frac{1}{2}\right)$,
\[
g(x)=\mathcal{O}\left(x^{\beta}\right),\;\;x\rightarrow\infty\,.
\]
\end{description}
\end{assumption}
In order to implement the hybrid scheme to the rBergomi model, we
need to introduce a particular class of non-stationary processes,
namely truncated Brownian semi-stationary ($\textit{tBss}$) processes,

\begin{equation}
\tilde{X}_{t}=\int_{0}^{t}g(t-s)\sigma_{s}dW_{s}~~t\geq0,\label{eq:tBss}
\end{equation}
where the kernel function $g(t)$, the volatility process $\sigma_{s}$
and the driving Brownian motion $W_{s}$ are as defined in the definition
of $\textit{Bss}$ processes. $\tilde{X}_{t}$ can also be seen as
the truncated stochastic integral at $0$ of the $\textit{Bss}$ process
$\bar{X}_{t}$. Equation \eqref{eq:tBss} is integrable since $g(t)$
is differentiable on $(0,\infty)$.

\subsection{Algorithm for hybrid scheme \label{subsec:Hybrid-scheme}}

Now, we can discretise equation \eqref{eq:tBss} in time. Let $N$
be the total number of time steps, $\Delta t=T/N$ be the time step
size, and $t_{0}=0\leq\ldots\leq t_{j}=j\Delta t\leq\ldots\leq t_{N}=T$
be a time grid on the interval $[0,T]$.

According to \citet{bennedsen2017hybrid}, the observations $\tilde{X}_{t_{j}}^{N},~~j=0,1,\cdots,N$
can be computed via ($\kappa=1$ case)
\begin{equation}
\tilde{X}_{t_{j}}^{N}=L_{g}\left(\Delta t\right)\sigma_{j-1}^{N}W_{j-1,1}^{N}+\sum_{k=1}^{j}g\left(b_{k}^{*}\Delta t\right)\sigma_{j-k}^{N}\bar{W}_{j-k}^{N}
\end{equation}
using the random vectors $W_{j}^{N},~~j=0,1,\cdots,N-1,$ the random
variables $\sigma_{j}^{N},~~j=0,1,\cdots,N-1,$ where $b_{k}^{*}=\left(\frac{k^{\alpha+1}-(k-1)^{\alpha+1}}{\alpha+1}\right)^{\frac{1}{\alpha}}$
, and the random vectors $\bar{W}_{i}^{N}\triangleq\int_{\frac{i}{N}}^{\frac{i+1}{N}}dW_{s}$
(see Proposition 2.8 in \citealt{bennedsen2017hybrid}).

To simulate the Volterra process $\tilde{X}$, we use:
\[
\left\{ \begin{aligned}L_{g} & \equiv1,\\
g(x) & \equiv x^{H-\frac{1}{2}},\\
\sigma(\cdot) & \equiv\sqrt{2\alpha+1}.
\end{aligned}
\right.
\]
then,
\[
\begin{aligned}W_{j-1,1}^{N} & =\int_{t_{j-1}}^{t_{j}}\left(t_{j}-s\right)^{\alpha}dW_{s}\approx\left(\frac{\Delta t}{2}\right)^{\alpha}\left(W_{t_{j}}-W_{t_{j-1}}\right)\\
\bar{W}_{j}^{N} & =\int_{t_{j}}^{t_{j+1}}dW_{s}=W_{t_{j+1}}-W_{t_{j}}\\
\sigma_{j}^{N} & =\sigma_{t_{j}}.
\end{aligned}
\]
The related matrix representation takes the form of
\begin{equation}
\left[\begin{array}{c}
\tilde{X}_{t_{1}}\\
\tilde{X}_{t_{2}}\\
\tilde{X}_{t_{3}}\\
\vdots\\
\tilde{X}_{t_{N}}
\end{array}\right]=\left[\begin{array}{ccccc}
W_{0,1} & 0 & \cdots & 0 & 0\\
W_{1,1} & g\left(b^{*}_2\Delta t\right)\bar{W}_{0} & \cdots & 0 & 0\\
W_{2,1} & g\left(b^{*}_2\Delta t\right)\bar{W}_{1} & \cdots & 0 & 0\\
\vdots & \vdots & \ddots & \vdots & \vdots\\
W_{N-1,1} & g\left(b^{*}_2\Delta t\right)\bar{W}_{N-2} & \cdots & g\left(b^{*}_{N-1}\Delta t\right)\bar{W}_{1} & g\left(b^{*}_N\Delta t\right)\bar{W}_{0}
\end{array}\right]\left[\begin{array}{c}
\text{\ensuremath{\sigma_{t_{1}}}}\\
\text{\ensuremath{\sigma_{t_{2}}}}\\
\text{\ensuremath{\sigma_{t_{3}}}}\\
\vdots\\
\text{\ensuremath{\sigma_{t_{N}}}}
\end{array}\right].\label{eq:matr}
\end{equation}
In the rBergomi model, $\sigma_{t_{i}}=\sigma$ is a constant for
$i=1,2,...,N$ defined in equation \eqref{eq:aBergomi_S1}. When simulating
$\tilde{X_{i}}$, we need to perform a matrix multiplication, the
computational complexity of which is of order $\mathcal{O}\left(N^{2}\right)$
when using the conventional matrix multiplication algorithm. However,
multiplying a lower triangular Toeplitz matrix can be regarded as
a discrete convolution which can be evaluated efficiently by fast
Fourier transform. Therefore the computational complexity can be reduced
to $\mathcal{O}\left(N\log N\right)$. The algorithm to simulate the
Volterra process $\tilde{X}$ is described in Algorithm \ref{algoX}.
Then we can use a standard Euler scheme to simulate the price $\left(S_{t_{1}},S_{t_{2}},\cdots,S_{t_{N}}\right)$.\\

\begin{algorithm2e}[H]

\DontPrintSemicolon

\SetAlgoLined

\Comment{Simulate $W_{t_{j}}$}

\While{$j=0,1,2,\cdots,N-1$}{generate random vectors $W_{t_{j}}$}

\Comment{Simulate $W_{t_{j-1},1}^{N}$}

\While{$j=1,2,\cdots,N$}{$W_{t_{j-1},1}^{N}=\left(\frac{\Delta t}{2}\right)^{\alpha}\left(W_{t_{j}}-W_{t_{j-1}}\right)$}

\Comment{Simulate $\bar{W}_{j}^{N}$}

\While{$j=0,1,2,\cdots,N-1$}{$\bar{W}_{j}^{N}=W_{t_{j+1}}-W_{t_{j}}$}
Simulate $\tilde{X}$ by the matrix multiplication \eqref{eq:matr}
using FFT

\caption{Volterra process $\tilde{X}$}

\label{algoX}\end{algorithm2e}

Below we give a simulation of the stock price in the rBergomi model
in Fig.\ref{fig:compre} (see Algorithm \ref{algoR}). The parameters
are listed in Table \ref{tab:rb}.

\begin{table}
\centering \caption{Parameters in the rBergomi model}
\begin{tabular}{|c|c|}
\hline
~~~~~~~~~~$\xi_{0}$~~~~~~~~~~  & ~~~~~~~~~0.026~~~~~~~~~ \tabularnewline
$\eta$ & 1.9 \tabularnewline
$\alpha$ & -0.43\tabularnewline
\hline
\end{tabular}\label{tab:rb}
\end{table}

\begin{algorithm2e}[H]

\DontPrintSemicolon

\SetAlgoLined

Simulate the Volterra process $\tilde{X}$ by the hybrid scheme referring
to Algorithm \ref{algoX}

\Comment{Spot variance $V_{t}$} Set initial values $V_{t}=\xi_{0}$

\While{$t=t_{1},t_{2},\cdots,t_{N}$} {$V_{t}=\xi_{0}e^{\eta\tilde{X}-\frac{\eta^{2}}{2}t^{2\alpha+1}}$}

\Comment{Log-stock price $\log(S_{t})$} Set initial values $\log(S_{t})=0$

\While{$t=t_{1},t_{2},\cdots,t_{N}$} {$\log(S_{t+\Delta t})\leftarrow\log(S_{t})+\sqrt{V_{t}}\Delta W_{t}-\frac{1}{2}V_{t}\Delta t$}

\caption{Rough Bergomi model}

\label{algoR}\end{algorithm2e}

\subsection{Approximation of the kernel}

For sake of simplicity, we start with deriving the approximation of the rBergomi model with 2 terms. It works in the same way when terms number is bigger than 2. The 2-term Bergomi model \eqref{eq:Bergomi} that we used to approximate
the rBergomi model is given as follows.
\begin{equation}
\left\{ \begin{aligned}dS_{t} & =S_{t}\sqrt{V_{t}}dW_{t}\\
d\xi_{s}^{t} & =\eta\xi_{s}^{t}\left(\alpha_{1}e^{-\kappa_{1}(t-s)}+\alpha_{2}e^{-\kappa_{2}(t-s)}\right)dB_{s},
\end{aligned}
\right.
\end{equation}
where $s\in[0,t)$. Here, we introduce the process $y_{s}^{t}$ defined
as
\begin{equation}
\left\{ \begin{aligned}y_{s}^{t} & =\alpha_{1}e^{-\kappa_{1}(t-s)}Y_{s}^{1}+\alpha_{2}e^{-\kappa_{2}(t-s)}Y_{s}^{2}\\
dY_{s}^{1} & =-\kappa_{1}Y_{s}^{1}ds+dB_{s}~~Y_{0}^{1}=0\\
dY_{s}^{2} & =-\kappa_{2}Y_{s}^{2}ds+dB_{s}~~Y_{0}^{2}=0.
\end{aligned}
\right.
\end{equation}
where $\kappa_{1},\kappa_{2}$ are from the exponential kernel $K_{\text{exp}}$,
and $Y_{s}^{1}$ and $Y_{s}^{2}$ are two O-U processes. Hence the
process $y_{s}^{t}$ can be written as a driftless Gaussian process
as follows:
\[
dy_{s}^{t}=\alpha_{1}e^{-\kappa_{1}(t-s)}dB_{s}+\alpha_{2}e^{-\kappa_{2}(t-s)}dB_{s},
\]
and its quadratic variation is $\langle dy^{t},dy^{t}\rangle_{s}=\varsigma^{2}(t-s)ds$
where $\varsigma(u)=\sqrt{\alpha_{1}^{2}e^{-2\kappa_{1}u}+\alpha_{2}^{2}e^{-2\kappa_{2}u}+2\alpha_{1}\alpha_{2}e^{-(\kappa_{1}+\kappa_{2})u}}$.
The forward variation process $\xi_{s}^{t}$ can be written as $d\xi_{s}^{t}=\eta_{s}^{t}dy_{s}^{t}$.
Thus, the solution of the forward variation process is $\xi_{s}^{t}=\xi_{0}f^{t}(s,y_{s}^{t})$
where $f^{t}(s,y)=e^{\eta y-\frac{\eta^{2}}{2}\chi(s,t)}$ and

\begin{eqnarray}
\chi(s,t) & = & \int_{t-s}^{t}\varsigma^{2}(u)du\nonumber \\
 & = & \int_{t-s}^{t}\alpha_{1}^{2}e^{-2\kappa_{1}u}+\alpha_{2}^{2}e^{-2\kappa_{2}u}+2\alpha_{1}\alpha_{2}e^{-(\kappa_{1}+\kappa_{2})u}du\nonumber \\
 & = & \alpha_{1}^{2}e^{-\kappa_{1}(t-s)}\frac{1-e^{-2\kappa_{1}s}}{2\kappa_{1}}+\alpha_{2}^{2}e^{-2\kappa_{2}(t-s)}\frac{1-e^{-2\kappa_{2}s}}{2\kappa_{2}}+2\alpha_{1}\alpha_{2}e^{-(\kappa_{1}+\kappa_{2})(t-s)}\frac{1-e^{-(\kappa_{1}+\kappa_{2})s}}{\kappa_{1}+\kappa_{2}}\label{eq:Xist}
\end{eqnarray}
Recall that $V_{t}=\xi_{t}^{t}=\xi_{0}e^{\eta y_{t}^{t}-\frac{\eta^{2}}{2}\chi(t,t)}$
and $\chi(t,t)\underset{s\rightarrow t}{\simeq}t^{2\alpha+1}$ when
$s\rightarrow t$ under the condition that the factor number is large
enough (this formula is more applicable than \eqref{eq:Xist} when
$s\rightarrow t$ , provided $n$ is large enough).

Using the approximation by Bergomi model, we consider the parameters
$\left\{ \alpha_{i},\kappa_{i}\right\} _{(i=1,2,\cdots,n)}$ in the
exponential kernel $K_{\text{exp}}=\sum_{i=1}^{n}\alpha_{i}e^{-\kappa_{i}(t-s)}$
on $s\in[0,t)$. Note that when $s\rightarrow t$, the power kernel
$K_{\text{pow}}\rightarrow\infty$ while $K_{\text{exp}}$ is finite.
To compute the approximation numerically, we need to truncate the
kernel $K_{\text{exp}}$. To do so we can use the $\texttt{scipy.optimize}$ module in $\texttt{Python}$ or the $\texttt{nlinfit}$
function in $\texttt{MATLAB}$ for the nonlinear regression of the
parameters $\left\{ \alpha_{i},\kappa_{i}\right\} _{(i=1,2,\cdots,n)}$
and the simulated price $\left\{ S_{t}\right\} $. We exemplify the
truncation of $K_{\text{exp}}$ by letting $s\in\left[0,T-\Delta t\right]$,
the truncated parameter $\theta=T-\frac{T}{N}=T-\Delta t$ and let
$T=1$.

We define the integral $I_{\text{trunc}}$ on the truncated region
$[0,\theta t)$ and apply the scaling property of Brownian motion
as follows:
\[
I_{\text{trunc}}=\sum_{i=1}^{n}\alpha_{i}\int_{0}^{\frac{\theta t}{T}}e^{-\kappa_{i}(t-s)}dB_{s}=\sum_{i=1}^{n}\alpha_{i}\sqrt{\frac{\theta}{T}}\int_{0}^{t}e^{-\kappa_{i}(1-\frac{\theta}{T})s}dB_{s}.
\]
After scaling $B_{s}$, the process $y_{s}$ demands change to be
driftless Gaussian and satisfy $y_{s}=\sum_{i=1}^{n}\alpha_{i}e^{-\kappa_{i}(1-\frac{\theta}{T})s}Y_{s}^{i}$
where $dY_{s}^{i}=\kappa_{i}(1-\frac{\theta}{T})Y_{s}^{i}ds+dB_{s}~~Y_{0}^{i}=0$.
Then the process $y_{s}$ can be written as $dy_{s}=\sum_{i=1}^{n}\alpha_{i}e^{-\kappa_{i}(1-\frac{\theta}{T})s}dB_{s}$.
Thus, the kernel in the rBergomi model on $[0,\frac{\theta}{T}t)$
can be approximated by $I_{\text{trunc}}=\sqrt{\frac{\theta}{T}}y_{t}$.

\begin{figure}[H]
\centering \includegraphics[scale=0.388]{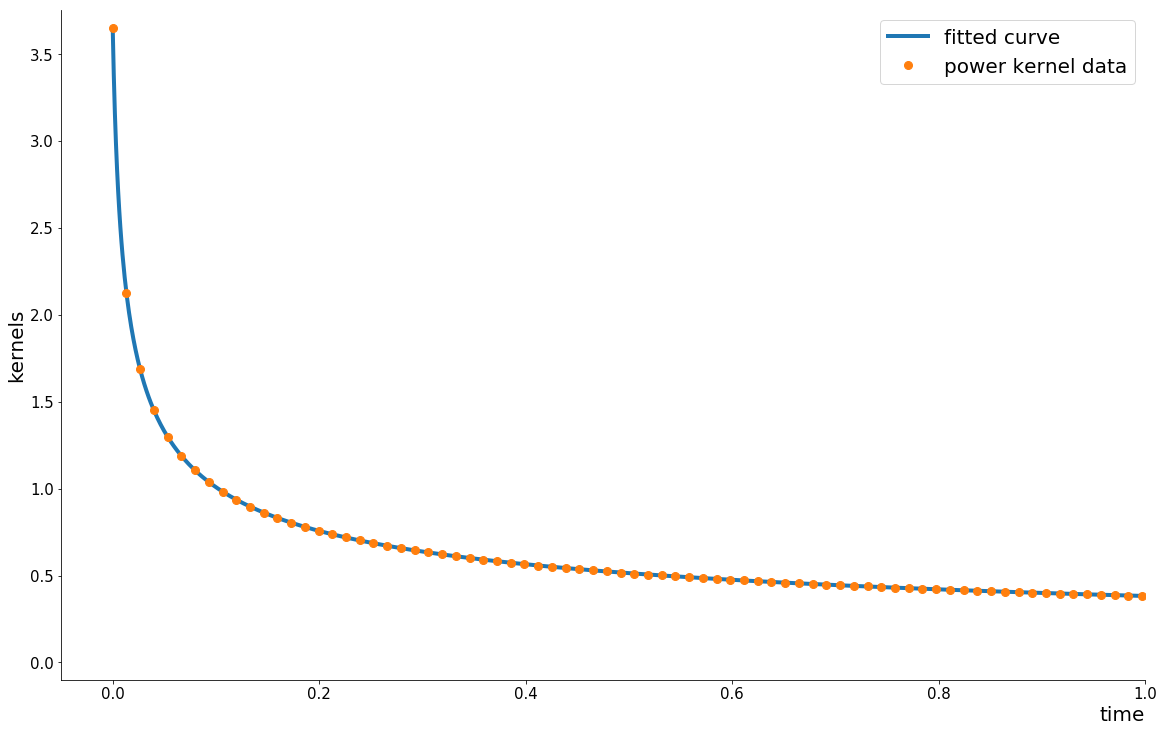} \caption{The power kernel $K_{\text{pow}}$ in the rBergomi model and the exponential
$K_{\text{exp}}$ in the 25-term aBergomi model when $T=1$ and $N=100$.}
\label{fig:appk}
\end{figure}

Figure \ref{fig:appk} displays the power kernel $K_{\text{pow}}$ in
the rBergomi model and the $K_{\text{exp}}$ in the 25-term aBergomi
model when $T=1$ and $N=100$. This figure suggests that $K_{\text{exp}}$
is sufficiently accurate for nonlinear regression, with a Root-mean-square
error (RMSE) of $1.25095\times10^{-5}$.

\begin{figure}[H]
\centering \includegraphics[scale=0.346]{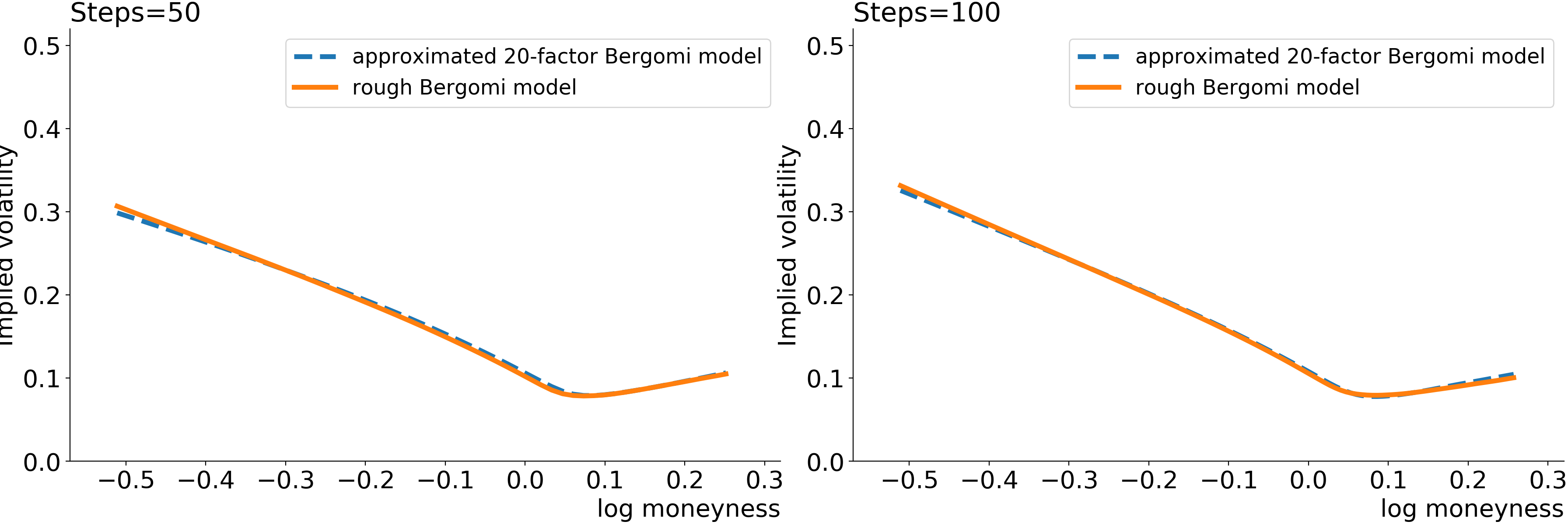} \caption{Volatility smiles for rBergomi and 25-term aBergomi models with $T=1$
simulated by $20000$ Monte Carlo paths.}
\label{fig:all_pictures}
\end{figure}

The method for simulating the variance in the aBergomi model is described
in Section \ref{subsec:Hybrid-scheme}, which leads directly to the
volatility smiles in Figure \ref{fig:all_pictures}  (see Algorithm \ref{algoB}).
From Figure \ref{fig:all_pictures}, we note that the at-the-money calibration is better with 50 time steps at the cost of a worse out-of-the-money calibration. Meanwhile, 100 time steps can approximate the rBergomi model visually well.
However, we multiply the aBergomi smile by a constant for different
time steps since the Riemann-sum scheme is able to capture the shape
of the implied volatility smile, but not its level (see \citealt{bennedsen2017hybrid}).
To generate realistic implied volatility smiles, we determine the
square of multiplication factors for different time steps in Table \ref{tab:fit}.

\begin{figure}[H]
\centering \includegraphics[scale=0.456]{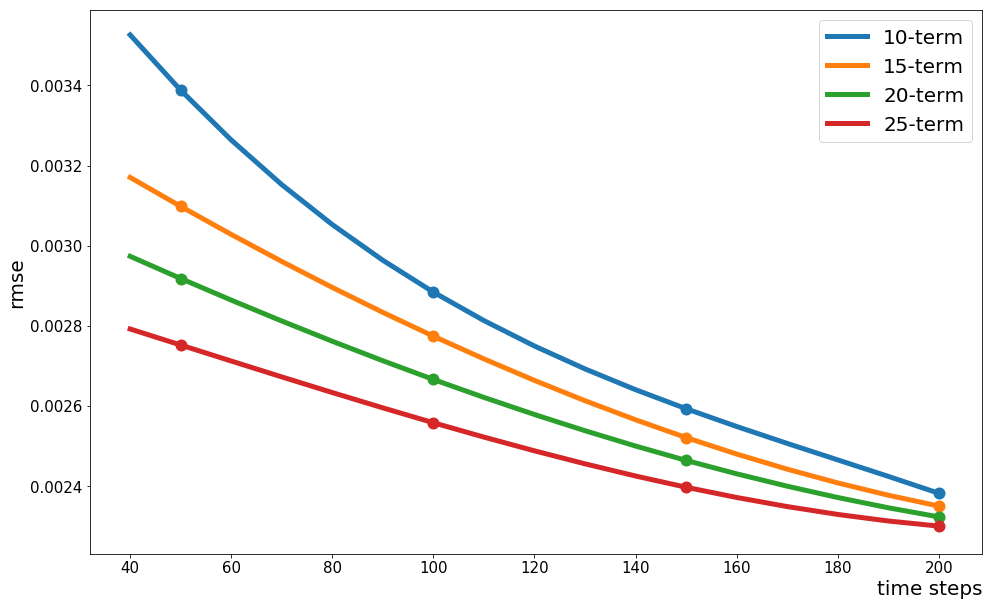} \caption{RMSE of the the implied volatility smiles of aBergomi model with three
different number of terms (15, 20 and 25) at different number of time
steps under 20000 Monte Carlo paths}
\label{fig:multi}
\end{figure}

\begin{table}
\centering \caption{The square of multiplication factors for different steps}
\begin{tabular}{|c|c|}
\hline
time steps & square of multiplication factors \tabularnewline
\hline
50  & 0.750323909 \tabularnewline
\hline
100 & 0.550447453\tabularnewline
\hline
150 & 0.485093611\tabularnewline
\hline
200 & 0.450392126\tabularnewline
\hline
\end{tabular}\label{tab:fit}
\end{table}

\begin{algorithm2e}[H]

\DontPrintSemicolon

\SetAlgoLined

\Comment{Driftless Gaussian process $y_{s}=\sum_{i=1}^{n}\alpha_{i}e^{-\kappa_{i}(1-\theta)s}Y_{s}^{i}$}
Set initial values $y_{s}=\text{zeros}(M,N)$, $Y_{0}^{i}=0$

\While{($s=t_{1},t_{2},\cdots,t_{N}$\ ) and ($i=1,2,\cdots,n$)}
{ $Y_{s+\Delta t}\leftarrow Y_{s}^{i}+\kappa_{i}(1-\theta)Y_{s}^{i}\Delta t+\Delta W_{s}$}

\Comment{Spot variance $V_{t}$} Set initial values $V_{t}=\xi_{0}$

\While{$t=t_{1},t_{2},\cdots,t_{N}$} {$V_{t}=\xi_{0}e^{\text{multiplication factor}\cdot\sqrt{\theta}y_{t}-\frac{\eta^{2}}{2}t^{2\alpha+1}}$}

\Comment{Log-stock price $\log(S_{t})$}

Set initial values $\log(S_{t})=0$

\While{$t=t_{1},t_{2},\cdots,t_{N}$} {$\log(S_{t+\Delta t})\leftarrow\log(S_{t})+\sqrt{V_{t}}\Delta W_{t}^{1}-\frac{1}{2}V_{t}\Delta t$}

\caption{n-term aBergomi model when $T=1$}

\label{algoB}\end{algorithm2e}

\begin{table}
\centering\caption{Runtime (in s) of the rBergomi model and the aBergomi model for different time steps with $T=1$ and 20000 Monte Carlo paths}
\begin{tabular}{ |c|c|c|c|c|c| } 
 \hline
 Time steps & rBergomi & 10-term aBergomi &15-term aBergomi & 20-term aBergomi & 25-term aBergomi \\ 
 \hline
 50  & 0.408105 & 0.099360 & 0.139226 & 0.172056 & 0.216408\\ 
 100 & 0.500114 & 0.236882 & 0.313049 & 0.402372 & 0.454303\\ 
 150 & 0.560219 & 0.365012 & 0.449856 & 0.558937 & 0.686911\\ 
 200 & 0.586050 & 0.425685 & 0.590825 & 0.725772 & 0.872691\\ 
 \hline
\end{tabular}\label{tab:speed}
\end{table}
We compute the RMSE of the implied volatility approximation with different
numbers of terms in the aBergomi model and different time steps in
Figure \ref{fig:multi} and compare the pricing speed in Table \ref{tab:speed}. The numerical results suggest that the RMSE
of different term numbers reduces to the same low level as the number
of time steps increases. Therefore, we may conclude that choosing
the 25-term O-U process and 100 time steps can produce a good output,
with reliable outcomes and fast calculation speed with 20000 Monte Carlo paths.

\section{Conclusion}

In this paper, we prove the power-law behavior of the ATM volatility
skew as time to maturity goes to zero of the rBergomi model and we
also propose an aBergomi model with finite terms to approximate the
rBergomi model. The approximation enables the adoption of classical
pricing methods while keeping the fractional feature of the model.
When the terms number in the aBergomi model is large enough, we can
prove its convergence to the rBergomi model. We not only give the
theoretical proofs, but also give its numerical results. A hybrid
scheme for the rBergomi model with the computational complexity $\mathcal{O}(N\log N)$
is developed for the aBergomi model. Numerically simulated results
by the hybrid scheme demonstrate the accuracy and efficiency of the
approximation.

The model parameters used in the numerical test are calibrated from
the regression of the power-law kernel of the rBergomi model. Other
efficient calibration methods are worth investigation for future research. 
\appendix
\section{Appendix}
\subsection{Proof of Theorem \ref{thm:kernel_convergence} }
In this subsection, we give the proof of the theoretical results in Theorem \ref{thm:kernel_convergence}.
\begin{proof}
Let $\left(p_{i}^{n}\right){}_{0\leq i\leq n}$ be auxiliary mean
reversion speeds such that $p_{i-1}^{n}\leq x_{i}^{n}\leq p_{i}^{n}$
for $i\leq\{1,\cdots,n\}$ and $p_{0}^{n}=0$. Recall that $K(\tau)=\int_{0}^{\infty}e^{-x\tau}\mu(dx)$.
We have
\begin{equation}
\begin{aligned}\|K^{n}-K\|_{2,T} & =\left\Vert \sum_{i=1}^{n}\alpha_{i}^{n}e^{-x_{i}^{n}\tau}-\int_{0}^{\infty}e^{-x\tau}\mu\left(dx\right)\right\Vert _{2,T}\\
 & \leq\int_{0}^{\infty}\left\Vert e^{-x(\cdot)}\right\Vert _{2,T}\mu\left(dx\right)+\sum_{i=1}^{n}\left\Vert \alpha_{i}^{n}e^{-x_{i}^{n}(\cdot)}-\int_{p_{i-1}^{n}}^{p_{i}^{n}}e^{-x(\cdot)}\mu(dx)\right\Vert _{2,T}.
\end{aligned}
\label{eq:conv1}
\end{equation}
The first term on the RHS of the inequality \eqref{eq:conv1} can
be estimated as below:
\[
\int_{p_{n}^{n}}^{\infty}\left\Vert e^{-x(\cdot)}\right\Vert _{2,T}\mu\left(dx\right)=\int_{p_{n}^{n}}^{\infty}\sqrt{\frac{1-e^{-2xT}}{2x}}\mu\left(dx\right)\leq\frac{\left(p_{n}^{n}\right){}^{-H}}{\sqrt{2}H\Gamma\left(\frac{1}{2}-H\right)}.
\]
For the second term, applying a second-order Taylor expansion of the
exponential function $e^{x}=1+x+\frac{x^{2}}{2}+\text{\ensuremath{\int_{0}^{x}\frac{(x-u)^{3}}{6}du}}$
for $t\in[0,T]$, choosing $\alpha_{i}^{n}=\int_{p_{i-1}^{n}}^{p_{i}^{n}}\mu\left(dx\right)$
and $x_{i}^{n}=\left(\frac{\int_{p_{i-1}^{n}}^{p_{i}^{n}}x^{4}\mu(dx)}{\int_{p_{i-1}^{n}}^{p_{i}^{n}}\mu(dx)}\right)^{\frac{1}{4}}$,
we can obtain that
\begin{eqnarray*}
\begin{aligned}
\left|\alpha_{i}^{n}e^{-x_{i}^{n}t}-\int_{p_{i-1}^{n}}^{p_{i}^{n}}e^{-xt}\mu\left(dx\right)\right| 
 = & \left|\alpha_{i}^{n}\left(1+(-x_{i}^{n}t)+\frac{(-x_{i}^{n}t)^{2}}{2}\right)-\int_{p_{i-1}^{n}}^{p_{i}^{n}}\left(1+\left(-xt\right)+\frac{\left(-xt\right)^{2}}{2}\right)\mu\left(dx\right)\right|\\
& +\left| \alpha_{i}^{n}\left(\int_{0}^{x_{i}^{n}t}\frac{\left(x_{i}^{n}t-u\right)^{3}}{6}du\right)-\int_{p_{i-1}^{n}}^{p_{i}^{n}}\int_{0}^{xt}\frac{\left(xt-u\right)^{3}}{6}du\mu(dx)\right| \\
  =  & \int_{p_{i-1}^{n}}^{p_{i}^{n}}\left(xt-x_{i}^{n}t\right)+\frac{\left(-x_{i}^{n}t\right)^{2}-\left(-xt\right)^{2}}{2}\mu\left(dx\right)\\
  \leq & \frac{t^{2}}{2}\int_{p_{i-1}^{n}}^{p_{i}^{n}}\left(x-x_{i}^{n}\right)^{2}\mu\left(dx\right)
 \end{aligned}
\end{eqnarray*}

since
\begin{align*}
 & \int_{p_{i-1}^{n}}^{p_{i}^{n}}\left\{ \int_{0}^{x_{i}^{n}t}\frac{(x_{i}^{n}t-u)^{3}}{6}du-\int_{0}^{xt}\frac{(xt-u)^{3}}{6}du\right\} \mu(dx)\\
= & \int_{p_{i-1}^{n}}^{p_{i}^{n}}\left\{ x_{i}^{n}t\int_{0}^{1}\frac{(x_{i}^{n}t-x_{i}^{n}ts)^{3}}{6}ds-xt\int_{0}^{1}\frac{(xt-xts)^{3}}{6}ds\right\} \mu(dx)\,\,\,,\,\,\,s=\frac{u}{xt}\\
= & \int_{p_{i-1}^{n}}^{p_{i}^{n}}\left\{ \left(x_{i}^{n}t\right)^{4}\int_{0}^{1}\frac{(1-s)^{3}}{6}ds-\left(xt\right)^{4}\int_{0}^{1}\frac{(1-s)^{3}}{6}ds\right\} \mu(dx)\\
= & \left\{ t^{4}\int_{0}^{1}\frac{(1-s)^{3}}{6}ds\right\} \int_{p_{i-1}^{n}}^{p_{i}^{n}}\left\{ \left(x_{i}^{n}\right)^{4}-\left(x\right)^{4}\right\} \mu(dx)\\
= & \left\{ t^{4}\int_{0}^{1}\frac{(1-s)^{3}}{6}ds\right\} \int_{p_{i-1}^{n}}^{p_{i}^{n}}\left\{ \left(\frac{\int_{p_{i-1}^{n}}^{p_{i}^{n}}x\mu\left(dx\right)}{\int_{p_{i-1}^{n}}^{p_{i}^{n}}\mu\left(dx\right)}\right)^{4}-\left(x\right)^{4}\right\} \mu(dx)\\
= & 0.
\end{align*}

Hence,
\[
\sum_{i=1}^{n}\left\Vert \alpha_{i}^{n}e^{-x_{i}^{n}(\cdot)}-\int_{p_{i-1}^{n}}^{p_{i}^{n}}e^{-x(\cdot)}\mu(dx)\right\Vert _{2,T}\leq\frac{T^{\frac{5}{2}}}{2\sqrt{5}}\sum_{i=1}^{n}\int_{p_{i-1}^{n}}^{p_{i}^{n}}\left(x-x_{i}^{n}\right){}^{2}\mu(dx).
\]

Thus, the convergence of $K^{n}$ depends on the weights $\alpha_{i}$
and mean reversions $x_{i}$. Let $p_{i}^{n}=i\pi_{n}$ for each $i\in\{1,\cdots,n\}$
and $\pi_{n}>0$. We have
\[
\begin{aligned}\sum_{i=1}^{n}\int_{p_{i-1}^{n}}^{p_{i}^{n}}(x-x_{i}^{n})^{2}\mu(dx) & \leq\pi_{n}^{2}\int_{0}^{p_{n}^{n}}\mu(dx)=\frac{\pi_{n}^{\frac{5}{2}-H}n^{\frac{1}{2}-H}}{\left(\frac{1}{2}-H\right)\Gamma\left(\frac{1}{2}-H\right)}\end{aligned}
\]
We can also proceed to get the explicit expressions of $\alpha_{i}^{n}$
and $x_{i}^{n}$ as follows:
\begin{eqnarray*}
\alpha_{i}^{n} & =\frac{\left(i\pi_{n}\right)^{\frac{1}{2}-H}-\left[(i-1)\pi_{n}\right]^{\frac{1}{2}-H}}{\left(\frac{1}{2}-H\right)\Gamma\left(\frac{1}{2}-H\right)},\  & x_{i}^{n}=\frac{1-2H}{3-2H}\cdot\frac{\left(i\pi_{n}\right)^{\frac{3}{2}-H}-\left[(i-1)\pi_{n}\right]^{\frac{3}{2}-H}}{\left(i\pi_{n}\right)^{\frac{1}{2}-H}-\left[(i-1)\pi_{n}\right]^{\frac{1}{2}-H}}.
\end{eqnarray*}
Since $p_{n}^{n}=n\pi_{n}\rightarrow\infty$ , we have $\pi_{n}^{\frac{5}{2}-H}n^{\frac{1}{2}-H}\rightarrow0$
as $n\rightarrow+\infty$\textcolor{blue}{{} }when $\pi_{n}<n^{-\frac{1}{6}}$
,

\begin{eqnarray}
\left\Vert K^{n}-K\right\Vert _{2,T} & \leq & \frac{1}{\sqrt{2}H\Gamma\left(\frac{1}{2}-H\right)}\left[\left(p_{n}^{n}\right)^{-H}+\frac{T^{\frac{5}{2}}H}{\sqrt{10}\left(\frac{1}{2}-H\right)}\left(p_{n}^{n}\right)^{\frac{1}{2}-H}\pi_{n}^{2}\right]\nonumber \\
 & = & \frac{1}{\sqrt{2}H\Gamma\left(\frac{1}{2}-H\right)}\left[n^{-H}\pi_{n}^{-H}+\frac{T^{\frac{5}{2}}H}{\sqrt{10}\left(\frac{1}{2}-H\right)}n^{\frac{1}{2}-H}\pi_{n}^{\frac{5}{2}-H}\right]\nonumber \\
 & = & ax^{-H}+bx^{\frac{5}{2}-H}\label{eq:conv}
\end{eqnarray}

Let $x=\pi_{n}$, RHS $y=ax^{-H}+bx^{\frac{5}{2}-H}$ and $y^{'}=-aHx^{-H-1}+b\left(\frac{5}{2}-H\right)x^{\frac{3}{2}-H}=0$,
solving for $x$, we have $x^{\frac{2}{5}}=\frac{aH}{b\left(\frac{5}{2}-H\right)}$,
where $a=n^{-H}$ and $b=\frac{T^{\frac{5}{2}}H}{\sqrt{10}\left(\frac{1}{2}-H\right)}n^{\frac{1}{2}-H}$
\begin{eqnarray*}
x & = & \pi_{n}=\left[\frac{n^{-H}H\sqrt{10}\left(\frac{1}{2}-H\right)}{T^{\frac{5}{2}}Hn^{\frac{1}{2}-H}\left(\frac{5}{2}-H\right)}\right]^{\frac{2}{5}}=\left[\frac{n^{-\frac{1}{2}}\sqrt{10}\left(\frac{1}{2}-H\right)}{T^{\frac{5}{2}}\left(\frac{5}{2}-H\right)}\right]^{\frac{2}{5}}=\frac{n^{-\frac{1}{5}}}{T}\left[\frac{\sqrt{10}\left(\frac{1}{2}-H\right)}{\left(\frac{5}{2}-H\right)}\right]^{\frac{2}{5}}
\end{eqnarray*}
When $\pi_{n}=\frac{n^{-\frac{1}{5}}}{T}\left[\frac{\sqrt{10}\left(\frac{1}{2}-H\right)}{\left(\frac{5}{2}-H\right)}\right]^{\frac{2}{5}}$,
the RHS of equation \eqref{eq:conv} attains its minimum and $\|K^{n}-K\|_{2,T}\leq Cn^{\frac{-4H}{5}}$
where $C=\frac{1}{\sqrt{2}H\Gamma\left(\frac{1}{2}-H\right)}T^{H}\left[\frac{\sqrt{10}\left(\frac{1}{2}-H\right)}{\frac{5}{2}-H}\right]^{-\frac{5}{2}H}\frac{\frac{5}{2}}{\frac{5}{2}-H}$
is a constant.
\end{proof}


\end{document}